\newtheorem{thm}{Theorem}[section]
\newtheorem{lem}[thm]{Lemma}
\newtheorem{rem}[thm]{Remark}
\newtheorem{defi}[thm]{Definition}
\newtheorem{eg}[thm]{Example}
\definecolor{Gray}{gray}{0.85}
\definecolor{LightCyan}{rgb}{0.88,1,1}
\newcolumntype{a}{>{\columncolor{red}}c}
\newcolumntype{g}{>{\columncolor{green}}c}
\newcommand{\mb}[1]{\mathbf{#1}}
\newcommand{\inD}{\operatorname{\stackrel{\cD}{\to}}}
\newcommand{\rank}{\operatorname {rank}}
\newcommand{\NN}{{\mathbb N}}%
\newcommand{\RR}{{\mathbb R}}%
\newcommand{\SSS}{{\mathbb S}}%
\newcommand{\cD}{\mathcal{D}}
\newcommand{\fS}{\mathfrak{S}}
\def\blfootnote{\xdef\@thefnmark{}\@footnotetext}
\title{Constrained Shape Analysis with Applications to RNA Structure
}
\author{Kanti V. Mardia$^{*,1}$, Benjamin Eltzner$^{\dagger}$ and Stephan F. Huckemann$^{\dagger,2}$}
\begin{document}
\maketitle

\blfootnote{$^{*}$ University of Leeds}
\blfootnote{$^{\dagger}$ Felix-Bernstein-Institute, University of G\"ottingen}
\blfootnote{$^{1}$ k.v.mardia@leeds.ac.uk}
\blfootnote{$^{2}$ huckeman@math.uni-goettingen.de}

\begin{abstract}
  In many applications of shape analysis, lengths between some landmarks are constrained. For instance, biomolecules often have some bond lengths and some bond angles constrained, and variation occurs only along unconstrained bonds and constrained bonds' torsions where the latter are conveniently modelled by dihedral angles. Our work has been motivated by low resolution biomolecular chain RNA where only some prominent atomic bonds can be well identified. Here, we propose a new modelling strategy for such constrained shape analysis starting with a product of polar coordinates (polypolars), where, due to constraints, for example, some radial coordinates should be omitted, leaving products of spheres (polyspheres). We give insight into these coordinates for particular cases such as five landmarks which are motivated by a practical RNA application. We also discuss distributions for polypolar coordinates and give a specific methodology with illustration when the constrained size-and-shape variables are concentrated. There are applications of this in clustering and we give some insight into a modified version of the MINT-AGE algorithm.
\end{abstract}
 
\section{Introduction}

Statistical Shape Analysis deals with analysing shapes of objects given landmarks (see for example, \cite{Dryden2016}) where some transformations are filtered out; in particular, for size-and-shape analysis, translation and rotation are filtered out. This paper is motivated by an application in predicting RNA structure (backbone conformations) where some distances (bond lengths) between landmarks are constant or constrained, see \cite{wiechers2025RNAPrecis}. We develop in this paper shape analysis under given constraints. One of the main approaches in shape analysis is to use Procrustes analysis for registration but here we find that the coordinates' approach of registration by frames is appropriate to allow for the known constraints. Here we deal with size-and-shape analysis but we indicate that the approach can be extended to other shapes such as similarity shapes and affine shapes.

In shape analysis of biomolecules, often atomic nuclei serve as landmarks. When some landmarks are constrained, due to fixed atomic bonds lengths and restricted bond angles, variation of shape thus occurs only along these bonds' torsions. Modelling these by dihedral angles results in a very convenient and concise shape description given by elements on a torus. As the torus is a product of one-dimensional spheres, in our approach, starting with \emph{polypolar coordinates} (products of polar coordinates of varying dimensions) motivated by \emph{Goodall-Mardia coordinates} \citep{goodallkvm1991,goodallkvm1992,goodall1993multivariate}, we aim at discarding radial parts due to constraints, leaving the spherical parts, leading to polyspheres: products of spheres. In fact, our polypolar coordinates allow to model various extents of constraints, from no constraints at all in classical shape analysis, to strong constraints in the above mentioned torus model. 

In this contribution we focus on modelling length constrained size-and-shape. There are also angular constrained size-and-shapes, or both, such as dihedral angles. In passing, in our examples, we also touch the latter. We note that a lot of work has already been done (e.g. \cite{murray2003rna,Altis2008,SargsyanWrightLim2012,EltznerHuckemannMardia2018,zoubouloglou2022scaled,wiechers2023learningJRSSC}) for length and angular constrained angular shape, but not in this unified way we propose here.

Also, our polypolar coordinates conveniently allow to model multicentring which is a generalization of translation.

The polysphere part can be subjected to the rich body of polysphere analysis (e.g. \cite{HE_LASR15}), an extension of torus PCA that was originally developed for torus data (see above).

Correctly modelling RNA backbone conformations in X-ray crystallography and cryoEM experimental 3D structures is a long standing challenge in structural biology (see \cite{liebschner2019macromolecular}). In addressing the problem of predicting high detail RNA structure geometry from the information available in low resolution experimental maps of electron density, it is found that at low resolution ($\gtrapprox 2.5$\AA{}) the five atoms (including 3 of the backbone) of the RNA structure can be determined from electron density but all other backbone atom positions cannot be determined. In contrast, high resolution can determine all backbone atomic positions ($\leq 2$\AA{}). In consequence, high detailed conformations allow for modelling all constraints (length and angular) yielding dihedral angles only.

\begin{figure}[b]
  \centering
  \includegraphics[width=0.9\textwidth]{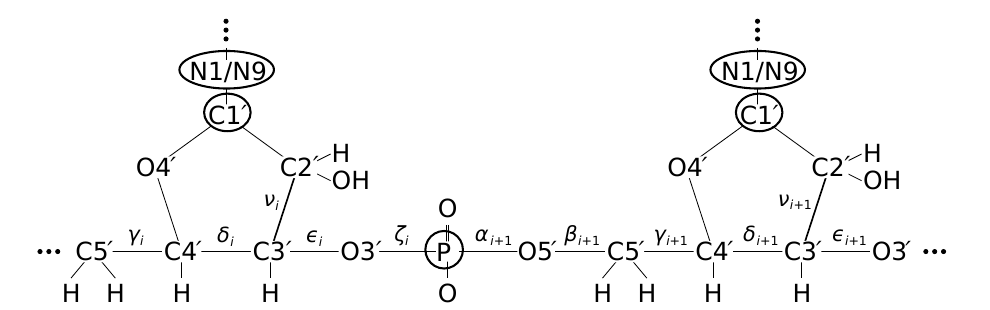}
  \caption{The five landmarks (circled) of the low detail RNA structure with the backbone and bases starting at N1/N9. \label{RNA}}
\end{figure}

As a running example we will specifically, consider the following \emph{five atoms} of an RNA suite, which ranges from one base to the next, 
\begin{center}
1: N1/N9, 2: C1', 3: P, 4: C1', 5: N1/N9\\ 
\end{center}
where P is Phosphorous, C1' is Carbon and N1/N9 is Nitrogen; carbon and nitrogen are on two sides of phosphorous. The bond lengths between 1 and 2, and between 4 and 5 are fixed, i.e. constrained. In contrast, the pseudo bond lengths between 2 and 3, as well as between 3 and 4 are variable. These are called \emph{pseudo bonds} because the participating atoms share chemical bonds only with atoms in between. P and the ones with a dash, namely C1', are on the backbone of the RNA whereas the atoms N1/N9 are part of the base of the RNA (see Figure \ref{RNA}). Note that the RNA backbone has many degrees of freedom: 7 dihedral angles, plus different pucker conformations of the ribose sugar, these are the pentagons to which bases bind, see Figure \ref{RNA}.

Achieving an accurate model of the RNA backbone is important for understanding RNA structure and function. We conclude with a brief illustration and propose a new method for mode hunting on a principal nested circle which is a core ingredient of the unsupervised RNA structure learning algorithm MINT-AGE, developed earlier \citep{wiechers2023learningJRSSC}.

\section{Size-and-Shape Coordinates}

\paragraph{Notation throughout this paper.} For $2 \leq k,m \in \NN$, the $k$-dimensional unit matrix is $I_k = \sum_{i=1}^k e_ie_i^T\in \RR^{k\times k}$ with the standard unit vectors $e_1,\ldots, e_k \in \RR^k$. Further, $1_k = e_1 + \ldots +e_k \in \RR^k$ has all entries equal to $1$, $0_k \in \RR^m$ is the vector with all entries equal to zero, and $SO(m)$ denotes the rotational group in $\RR^m$.

Given \emph{landmark} column vectors $x_j \in \RR^m$, $j=1,\ldots,k$, their \emph{configuration matrix} is $X = (x_1,\ldots,x_k) \in \RR^{m\times k}$.

Note that often in literature (see, for example, \cite{Dryden2016}) configuration matrices have landmarks as rows, then $X$ is a $k\times m$ matrix. Here, however, we find its transpose more convenient.

If $X$ is a configuration matrix, then a \emph{rigid body motion} (orientation preserving Euclidean transformation) takes $X$ to $ RX + a 1_k^T = X^*$, say, where $a\in \RR^m$ is an $m$-dimensional translation vector and $R \in SO(m)$ is an $m \times m$ rotation matrix. For the purposes of this paper, we then say that $X$ and $X^*$ have the same \emph{size-and-shape} following the standard terminology in shape analysis (see, for example, \citet{Dryden2016}). That is, the size-and-shape of $X$ is the equivalence class of configurations under the group of rigid motions. Recall that the term ``shape'' is a standard terminology in statistical shape analysis which means the equivalence class under the larger group of similarity transformations so the scale is also filtered out.

The translation and rotation effects can be filtered out by restricting an appropriate number of landmarks' coordinates to an appropriate subspace of the configuration space leading to \emph{size-and-shape coordinates}. Inspired by the literature and recent new applications we now introduce a broad scheme to obtain such coordinates.

We proceed in three steps: multicentring, polypolar coordinates and including constraints.

\subsection{Multicentring}
 
We assume that all configuration matrices $X = (x_1,\ldots,x_k)\in \RR^{m\times k}$ of concern feature a common \emph{frame}, i.e. $m+1$ landmarks in \emph{general position}. Thus, we assume that there is a fixed permutation $\pi\in \fS_k$ with fixed permutation matrix
$$P = \sum_{j=1}^k e_{\pi(j)} e_j^T$$
such that the first $m+1$ landmarks of
$$ (y_1,\ldots,y_k)=Y=XP_{\pi} = (x_{\pi(1)},\ldots,x_{\pi(k)})\,$$
are in general position, i.e. 
$y_2-y_1,\ldots,y_{m+1} -y_1$
are linearly independent. Then, \emph{general multicentring} is conveyed by a suitable fixed matrix $A \in \RR^{k\times k}$ of rank $k-1$ leading to landmarks
$$ W=(z_0,z_1,\ldots,z_{k-1}) := YA \mbox{ with }z_0=0_m\mbox{ and }z_1,\ldots,z_m\mbox{ linearly independent}\,.$$

Note that with a a singular value decomposition $A=\sum_{j=1}^{k-1} \lambda_j u_j v_j^T$ is with $U =(u_1,\ldots,u_k),V=(v_1,\ldots,v_k)\in SO(k)$ and $\lambda_1\geq,\ldots,\geq \lambda_{k-1} > 0$ from the multicentred configuration matrix
$$ W =YA$$
and the \emph{location information}
$$ Y u_k\,$$
removed by multicentring, the original configuration matrix can be retrieved:
$$ Y =(W + Yu_kv_k^T) (A + u_kv_k^T)^{-1}\,.$$

Of special interest and most convenient are multicentring matrices $A$ of form
\begin{eqnarray}
  \label{eq:general-mucen-A}
  A &=& I_k - \sum_{j=1}^k e_j \sum_{i=1}^k\varepsilon_{ij} e_i^T
  \quad \mbox{ with }\varepsilon_{ij} 
  \left\{\begin{array}{lcl} =1 &\mbox{for}& i,j = 1\,,\\
  =0&\mbox{for}& 2 \leq i =j \leq k\,,\\
  \in \{0,1\}&\mbox{ else.} 
  \end{array}\right.
\end{eqnarray}
Then, in $YA$ from all original landmarks $y_j$, the \emph{centre} $y_i$ is subtracted if $\varepsilon_{ij}\neq 0$, $1\leq i,j \leq k$. By design $y_1$ centred by itself, i.e. mapped to the origin, but no other landmark is centred by itself (otherwise too much location information would be lost). 

In fact, the number of $\{j \in \{1,\ldots,k\}: \exists 1\leq i \leq k \mbox{ with }\epsilon_{ij} \neq 0\}$ gives the number of centres. 

\begin{figure}
  \centering
  \includegraphics[width=1\linewidth]{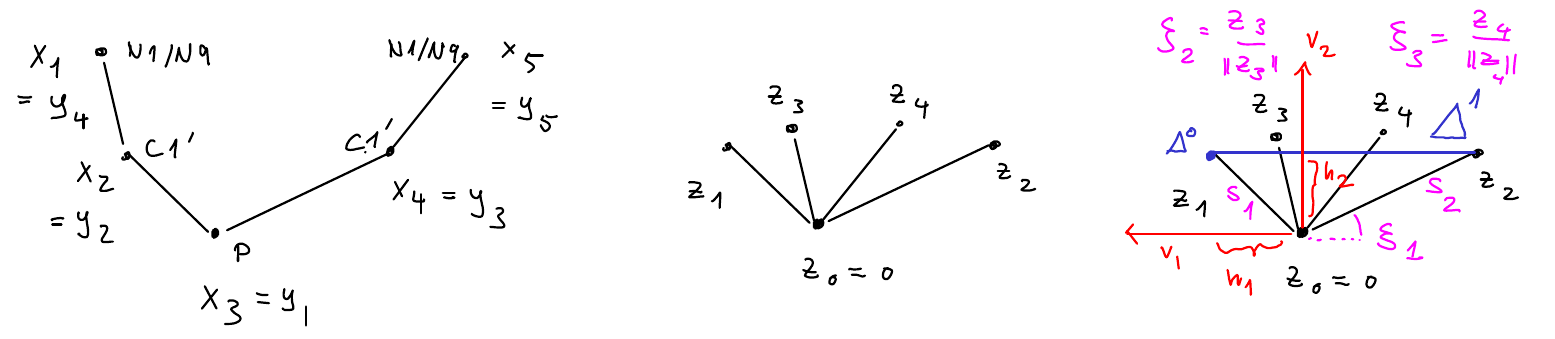}
  \caption{Coordinates for the five landmarks from low detail RNA backbone, see Figure \ref{RNA}. Left: Permuting landmarks. Middle: multicentring. Right: constrained MPP coordinates. See Examples \ref{eg:multicentring} and \ref{eg:MUCEPOPES}. \label{fig:5-MUCEPOPOS}}
\end{figure}

\begin{defi}\label{def:multicentring}
  Multicentring of a configuration matrix $X\in \RR^{m\times k}$ is achieved by first permuting the landmarks (columns) by a $k$-dimensional permutation matrix $P$, such that the first $m+1$ permuted landmarks are in general position, followed by multicentring the resulting landmarks by a matrix $A\in \RR^{k \times k}$ of form (\ref{eq:general-mucen-A}) and of $\rank(A) = k-1$. Then the columns $z_1,\ldots,z_{k-1}$ of
  $$ Z = XPA\left(\begin{array}{cc}
     0_{k-1}^T \\ \hline I_{k-1}
  \end{array}\right)\,.$$
  are the \emph{multicentred coordinates} of $X$ with origin $z_0 =0_m$.
\end{defi}

\begin{eg}\label{eg:multicentring}
  The following $A$ are frequently used.
  \begin{enumerate}
    \item \emph{Goodall-Mardia size-and-shape coordinates, Type 1} or GM SaS Type 1 coordinates use $\pi=id$ and $A=I_k - e_1 1_k^T$ and thus shift the first landmark to the origin.
    \item The \emph{five-point-representation}, see Figures \ref{RNA} and \ref{fig:5-MUCEPOPOS} from \cite{wiechers2025RNAPrecis} uses
    $$\pi = \left(\begin{array}{ccccc} 1 & 2 & 3 & 4 & 5\\ 3 & 2 & 4 & 1 &5 \end{array}\right)\mbox{ and }A=I_5 - e_1(e_1+e_2+e_3)^T -e_2e_4^T - e_3e_5^T\,.$$
    Here, the position $y_1=x_3$ of the phosphorous P together with the two C1' atom positions $y_2=x_2$ and $y_3=x_4$ naturally constitute a frame centred at $x_3$ which is subtracted from the other two atom positions yielding the pseudo bond lengths $\|z_1\| = \|y_2-y_1\| = \|x_2-x_3\|$ and $\|z_2\| = \|y_3-y_1\| = \|x_4-x_3\|$. Moreover from the two N1/N9 atom positions $y_4=x_1$ and $y_5=x_5$ their respective neighbouring C1' atom positions are subtracted, as the corresponding bond lengths $\|z_4\| = \|y_4-y_2\| = \|x_1-x_2\|$ and $\|z_5\| = \|y_5-y_4\| = \|x_5-x_4\|$ are fixed.
  
    Notably, here we essentially use three centres: $y_1=x_3, y_2=x_2$ and $y_3=x_4$.
  
    \item Dihedral angle representations (for strong constraints modelling, e.g. \cite{EltznerHuckemannMardia2018}, frequently use $\pi =id$ and
    $$A = I_k - e_1e_1^T - \sum_{j=1}^{k-1} e_j e_{j+1}\,.$$
    Here, all bond lengths $\|z_j\| = \|x_{j+1} - x_j\|$ for $1\leq j \leq k-1$ are fixed and so are the bond angles $\theta_j = \arccos \frac{z_{j+1}^Tz_j}{\|z_{j+1}\|\,\|z_j\|}$, $1\leq j \leq k-2$, see Figure \ref{fig:multicentred-dihedrals}.
  \end{enumerate}
\end{eg}

\begin{figure}
  \centering
  \includegraphics[width=0.8\linewidth]{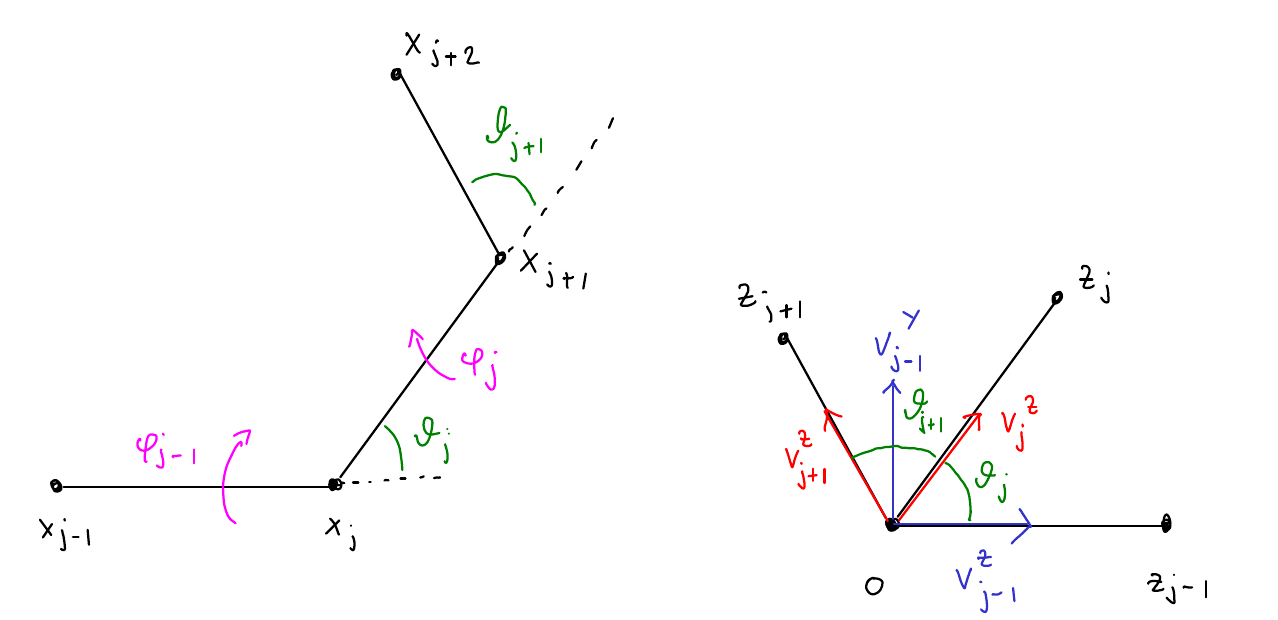}
  \caption{Left: part of a molecular backbone chain with bond angles $\theta_j,\theta_{j+1}$ and dihedral angles $\phi_{j-1},\phi_j$. Right: the same part multicentred with $z_{j-1} = x_j-x_{j-1},z_{j} = x_{j+1}-x_{j},z_{j+1} = x_{j+2}-x_{j+1}$. Their directions give the $z$-axis of the local coordinate system, where the $y$-axis is determined by the preceding $z$-coordinate. \label{fig:multicentred-dihedrals}}
\end{figure}

Notably, \emph{Goodall-Mardia size-and-shape coordinates, Type 2} or GM SaS Type 2 coordinates use $A=I_k - \frac{1}{k} 1_k 1_k^T$ which is the \emph{centring matrix} denoted by $C$ in \citet{Dryden2016}, subtracting the \emph{mean landmark}. This, however, does not fall into the paradigm of Definition \ref{def:multicentring}. Rather, removing translation was then achieved by isometrically and isomorphically mapping to the space of configurations of $k-1$ landmarks in $\RR^m$ via
$$ XB \in \RR^{m\times (k-1)}$$
with a matrix $B\in \RR^{k\times (k-1)}$ such that 
\begin{eqnarray}\label{eq:Helmertizing} \left(B\Big| \frac{1}{\sqrt{k} }\,1_k\right) &\in& SO(k)\,.\end{eqnarray}
Typically, a \emph{Helmert submatrix} is used for $B$.

We do not follow this avenue here because the new landmarks thus obtained loose their straightforward interpretation, as they are linear combinations of the original ones.  Further discussion can be found in \citet{Dryden2016}.

\subsection{Polypolar Coordinates}

In case of multicentring as in GM SaS Type 1 (see Example \ref{eg:multicentring}) or generalized multicentring as in GM SaS Type 2, \emph{Goodall-Mardia coordinates} of a configuration matrix $X$ are given by the columns of $T$ where 
$$ XA = VT$$ 
is a QR decomposition with $V\in SO(m)$ and upper triangular matrix $T$ \citep{goodallkvm1991,goodallkvm1992,goodall1993multivariate}. Inspired by this, we define multicentred polypolar coordinates.

\begin{defi}\label{def:popypolars}
  Suppose that $Z=(z_1,\ldots,z_{k-1})$ with $z_j \neq 0_m$ for all $1\leq j \leq k-1$ are multicentred coordinates of a configuration matrix $X$ obtained through $A$ as in Definition \ref{def:multicentring}. Then, with the QR decomposition
  \begin{equation*}\label{GM}
    Z= V T, 
  \end{equation*}
  where $V\in SO(m)$ and $T = (t_1,\ldots,t_{k-1})\in \RR^{m\times (k-1)}$ is an upper triangular matrix, determined by $t_{ii} >0$, where $t_i = (t_{i1},\ldots,t_{im})^T$, \emph{multicentred polypolar coordinates} (MPP coordinates) are given by
  $$\big((r_1,\zeta_1),\ldots,(r_{m-1},\zeta_{m-1}),\ldots,(r_{k-1},\zeta_{k-1})\big) \in \left(\prod_{j=1}^{m-1} (\RR_+\times \SSS^{j-1}_+)\right) \times \left( \RR_+\times \SSS^{m-1}\right)^{k-m}\,. $$
  Here, 
  $$r_j := \|t_j\|\mbox{ for }1\leq j\leq k-1\mbox{ and } \zeta_j := \frac{t_j}{r_j} \in \left\{\begin{array}{lcl}\SSS_+^{j-1} &\mbox{for}& 1\leq j \leq m-1\,,\\ \SSS^{m-1} &\mbox{for}& m\leq j \leq k-1\,.\end{array}\right.$$
  with the sphere 
  \begin{eqnarray*}
  \SSS^{m-1} &:=& \{\xi \in \RR^{m}: \|\xi\| = 1\}
  \end{eqnarray*}
  and the nested halfspheres,
  \begin{eqnarray*}
  \SSS^{j-1}_+ &:=& \{\zeta \in \SSS^{m-1}: e_{j}^T\zeta > 0,  e_{j+1}^T\zeta = \ldots = e_{m}^T\zeta =0\}
  \end{eqnarray*}
  for $1\leq j \leq m-1$.
\end{defi}

Note that $\SSS^0_+$ comprises the single point $e_1$, so that there is no variability along zero dimensional half spheres, which will thus be omitted.

Under constraints, some of the $r_j$ will be fixed and thus not considered any more. Also some of the spherical coordinates may be fixed under constraints. Then multicentred polypolar coordinates essentially live in an \emph{orthant} part times a \emph{polysphere} part:
$$ (\RR_+)^{k_1} \times \left(\prod_{j=1}^{k_2} S^{q_j}\right)$$
with $\RR_+ = (0,\infty)$ and spheres $S^{q_j}$ of dimension $1\leq q_j \leq m-1$, some of which may only be half spheres, $0\leq k_1\leq k-1$ and $1\leq j \leq k_2 \leq k-1$. Often, due to constraints, $k_1\ll k-1$ leading to a dominating polysphere part.

\begin{eg}\label{eg:MUCEPOPES} 
  In the five-point representation from Example \ref{eg:multicentring}, where $m=3$, we have the variable pseudo bond lengths $r_1 = \|z_1\|$ and $r_2 = \|z_2\|$, with $\zeta_1 = e_1$ omitted, as noted above, and the fixed bond lengths $\|z_3\|$ and $\|z_4\|$, also omitted. We have, however, different MPP coordinates depending on the embedding dimension.
 
  \begin{itemize}
    \item[(i)] In case of a planar configuration ($m=2$), see Figure \ref{fig:2D5-pts-F}, we have thus $\zeta_1,\zeta_2,\zeta_3 \in \SSS^1$ yielding the MPP coordinates
    $$(r_1,r_2,\zeta_1,\zeta_2,\zeta_3) \in \RR_+^2 \times (\SSS^1)^3\,. $$
    \item[(ii)] In case of a $m=3$ dimensional configuration, see Figure \ref{fig:3D5-pts-F} we have in contrast $\zeta_1\in \SSS^1_+$ and $\zeta_2,\zeta_3 \in \SSS^2$ yielding the MPP coordinates
    $$(r_1,r_2,\zeta_1,\zeta_2,\zeta_3) \in \RR_+^2 \times \SSS^1_+\times (\SSS^2)^2\,. $$
  \end{itemize}
\end{eg}

With more effort, we obtain dihedral angle representation from strongly constrained MPP coordinates.

\begin{eg}\label{eg:MUCEPOPES4dihedrals}
  Recall from Example \ref{eg:multicentring}, where $m=3$, multicentring a backbone chain $x_1,\ldots,x_k$ by $A = I_k -e_1e_1^T-\sum_{j=1}^{k-1} e_j e_{j+1}^T$ yielding
  $$Z=(z_1,\ldots,z_{k-1})=(x_2-x_1,\ldots,x_k-x_{k-1})\,.$$
  In case of fixed bond lengths $\|z_j\| = r_j>0$ ($1\leq j \leq k-1$) and fixed bond angles $\theta_j \in [0,2\pi)$ from $z_{j}$ to $z_{j+1}$, $\cos \theta_j = \frac{z_{j}^Tz_{j+1}}{r_jr_{j+1}}$ ($1\leq j \leq k-2$) only the variable torsion angles $\phi_j \in [0,2\pi)$ along the bond $z_{j}$ ($2\leq j \leq k-2$) convey shape variability, see Figure \ref{fig:multicentred-dihedrals}.

  We now introduce local coordinate systems. For $j=1$ we determine only a vertical axis 
  $$v_1^z := \frac{z_1}{r_1}\,,$$
  while for $2\leq j \leq k-1$ we define all three axes by 
  $$v_j^z := \frac{z_j}{r_j},\quad v_j^y := \frac{v^z_{j-1}-v_j^z(v_j^z)^Tv^z_{j-1} }{\|v^z_{j-1}-v_j^z(v_j^z)^Tv^z_{j-1} \|} = \frac{v^z_{j-1}-\cos\theta_{j-1}v_j^z}{\|v^z_{j-1}-\cos\theta_{j-1}v_j^z\|},\quad v_j^x := v_j^y \times v_j^z\,.$$
  In the last term, we used the three-dimensional vector cross product. Then, for $2\leq j \leq k-2$,
  $$v^z_{j+1} = \cos \theta_{j} v^z_{j} + \sin \theta_{j}\left(\cos \phi_j v^y_{j}+\sin \phi_j v^x_{j}\right)\,.$$
  and, if we restrict $0<\theta_j<\pi$, then
  $$\zeta_{j+1} = \frac{1}{\sin \theta_{j}}\left( (v^y_{j})^T v^z_{j+1}, (v^x_{j})^T v^z_{j+1}\right) \in \SSS^1\subset \SSS^2$$
  yielding the constrained MPP coordinates comprising only dihedral angles
  $$(\phi_j)_{j=2}^{k-2} = \left(\arctan_2\left( (v_j^x)^Tv_{j+1}^z,\,(v_j^y)^Tv_{j+1}^z\right)\right)_{j=2}^{k-2} \in (\SSS^1)^{k-3}\,.$$
\end{eg}

\subsection{Simplex MPP coordinates}

Here the frame $y_1,\ldots,y_{m+1}$ is viewed as a \emph{tower} of simplices \citep{barany1993random} which determine in a canonical way a new coordinate system given by $U\in SO(m)$ for the landmark columns of $T$ from Definition \ref{def:popypolars}. After multicentring via Definition \ref{def:multicentring}, the $z_1,\ldots,z_j$ span a simplex $\Delta^{j-1}\subset\RR^m$ of dimension $(j-1)$, $1\leq j \leq m$. Notably, these simplices form a tower, i.e. they are nested 
$$\{z_1\} = \Delta^0\mbox{ is a vertex of }\Delta^1\mbox{, which is an edge of }\Delta^2 \ldots \Delta^{m-2}\mbox{, which is a face of }\Delta^{m-1}\,.$$
Thus, choose the rotation $U=(u_1,\ldots,u_m)\in SO(m)$ such that $\SSS^{m-1}\ni u_{m} \perp \Delta_{m-1}$ pointing towards $\Delta_{m-1}$ and iteratively $\SSS^{m-1}\ni u_{m-j+1} \perp \Delta_{m-j}$ pointing towards $\Delta_{m-j}$ and $u_{m-j+1} \perp u_m,\ldots,u_{m-j+2}$ for $j=2,\ldots,m-1$.

The following lemma shows that the above approach and a closely related approach are possible and its constructive proof yields at once two implementable algorithms.

\begin{lem}\label{lem:simplex}
  Let $z_1,\ldots,z_m \in \RR^m$ be linearly independent. Then there are unique $U=(u_1,\ldots,u_m)\in SO(m)$ and
  \begin{enumerate}
    \item[(i)] unique $h_1\in \RR$, $h_2, \ldots,h_m >0$ such that
    $u_j^Tz_i = h_j\mbox{ for all } 1\leq i \leq j \leq m\,,$
    \item[(ii)] unique $h_1, \ldots,h_{m-1} >0 = h_m$ such that
    $ u_j^Tz_i = h_j\mbox{ for all } 1\leq i \leq j \leq m-1$ and $u_m^Tz_i = h_m$ for all $1\leq i \leq m-1$.
  \end{enumerate}
\end{lem}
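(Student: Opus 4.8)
The plan is to prove the lemma constructively, producing the columns of $U$ one at a time in the order $u_m,u_{m-1},\dots,u_1$ by a Gram--Schmidt-type orthogonalisation tied to the tower of affine hulls of $z_1,\dots,z_m$, and then removing the $2^m$ resulting sign ambiguities by imposing positivity of the prescribed $h_j$ together with $\det U=+1$. Cases (i) and (ii) will differ only in how the very first vector $u_m$ is selected, so the single construction yields the two advertised algorithms essentially for free.

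For the setup, for $1\le j\le m$ I would let $M_j:=\operatorname{span}\{z_i-z_1:1\le i\le j\}$, the direction space of the affine hull of $z_1,\dots,z_j$, and (for case (ii)) $N_{m-1}:=\operatorname{span}\{z_1,\dots,z_{m-1}\}$. Since $z_1,\dots,z_m$ are linearly independent, $0,z_1,\dots,z_m$ are affinely independent; hence $\dim M_j=j-1$, the flag $M_1\subset M_2\subset\cdots\subset M_m\subsetneq\RR^m$ is strictly increasing, $z_1\notin M_j$ for every $j$, and $M_j\subseteq N_{m-1}$ for $j\le m-1$ with $\dim N_{m-1}=m-1$. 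The observation driving everything is: for a unit vector $u$, the equalities $u^Tz_1=u^Tz_2=\cdots=u^Tz_j$ hold iff $u\perp M_j$, and then this common value is $u^Tz_1$; likewise $u_m^Tz_i=0$ for all $i\le m-1$ iff $u_m\perp N_{m-1}$.

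Next I would build one-dimensional subspaces $L_m,\dots,L_1$ recursively: put $L_m:=M_m^\perp$ in case (i) and $L_m:=N_{m-1}^\perp$ in case (ii), and for $j=m-1,\dots,1$ put $L_j:=M_j^\perp\cap(L_{j+1}\oplus\cdots\oplus L_m)^\perp$. Because $M_j\subseteq M_{j+1}\subseteq\cdots$ (and $M_j\subseteq N_{m-1}$), each of $L_{j+1},\dots,L_m$ already lies in $M_j^\perp$; as these are mutually orthogonal lines and $\dim M_j^\perp=m-j+1$, a dimension count gives $\dim L_j=1$, and $L_1,\dots,L_m$ are pairwise orthogonal with $L_1\oplus\cdots\oplus L_m=\RR^m$. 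Picking a unit vector $u_j\in L_j$ then makes $U=(u_1,\dots,u_m)$ have orthonormal columns with $u_j\perp M_j$ (and $u_m\perp N_{m-1}$ in case (ii)), so $h_j:=u_j^Tz_1$ (and $h_m:=0$ in case (ii)) satisfies every stated identity, using $z_i-z_1\in M_j\perp u_j$ for $i\le j$. For the sign normalisation, on each index $j$ at which positivity of $h_j$ is required I flip $u_j$ if necessary so that $h_j>0$; the one remaining sign ($u_1$ in case (i), $u_m$ in case (ii)) is then forced uniquely by $\det U=+1$, since reversing a single column reverses the determinant. Uniqueness follows by reading the construction backwards: any admissible $U$ must have $u_m\in M_m^\perp$ (resp.\ $N_{m-1}^\perp$), hence $u_m\in L_m$, and orthogonality to the columns already pinned down forces $u_j\in L_j$ for $j=m-1,\dots,1$, so each column is determined up to sign and the sign conditions then determine it outright.

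The step I expect to be the main obstacle — and the only place where the hypothesis is really used — is showing that the $h_j$ which must be made positive are nonzero in the first place, i.e.\ that $u_j^Tz_1\ne0$ for the relevant $j$; this is where $z_1\notin M_j$, and hence the full affine independence of $0,z_1,\dots,z_m$, must be brought to bear, and it has to be done with care because $u_j$ is the \emph{particular} direction in $M_j^\perp$ singled out by the orthogonalisation rather than an arbitrary normal. Everything else — the dimension bookkeeping for the $L_j$, the orthonormality of $U$, and the back-substitution verifying $u_j^Tz_i=h_j$ — is routine once the flag $M_1\subset\cdots\subset M_m$ and the ``constant inner products $\Leftrightarrow$ orthogonal to $M_j$'' observation are in place.
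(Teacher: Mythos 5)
Your construction is essentially the paper's construction in different clothing: you determine $u_m,\dots,u_1$ sequentially, each orthogonal to the direction space of the corresponding simplex and to the previously fixed columns, and you resolve signs via positivity of the $h_j$ and $\det U=+1$, whereas the paper does the same by solving the systems $w^Tz_i=1$ ($i\le j$), $w^Tu_i=0$ ($j<i\le m$). The problem is the step you yourself flag as ``the main obstacle'' and then leave open: showing that $u_j^Tz_1\neq 0$ at every index where positivity of $h_j$ is demanded. That step is the entire non-routine content of the lemma, and it cannot be closed from linear independence alone. The fact that $z_1\notin M_j$ only tells you that the orthogonal projection of $z_1$ onto $M_j^\perp$ is nonzero, i.e.\ that \emph{some} unit normal to $M_j$ has nonzero inner product with $z_1$; your $u_j$ is the \emph{specific} normal that is additionally orthogonal to $u_{j+1},\dots,u_m$, and that specific direction can be orthogonal to $z_1$. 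Concretely, for case (i) take $m=3$, $z_1=e_1$, $z_2=e_1+e_2$, $z_3=e_1+e_3$: the conditions force $u_3=\pm e_1$, hence $u_2\perp e_1$ and $u_2\perp z_2-z_1=e_2$, so $u_2=\pm e_3$ and $h_2=u_2^Tz_1=0$, which no sign choice makes positive. For case (ii) take $z_1=e_1$, $z_2=e_1+e_2$, $z_3=e_3$: one is forced to $u_3=\pm e_3$, $u_2=\pm e_1$, $u_1=\pm e_2$, whence $h_1=u_1^Tz_1=0$. So existence genuinely fails for such configurations, and with it your sign-normalisation and uniqueness arguments become moot at those indices.

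For what it is worth, the paper's own proof has exactly the same hole: it asserts without argument that the intermediate systems $w^Tz_i=1$ ($1\le i\le j$), $w^Tu_i=0$ ($j+1\le i\le m$) admit a unique nonzero solution, and in the first example above the system for $j=2$ reads $w^Te_1=1$, $w^T(e_1+e_2)=1$, $w^Te_1=0$, which is inconsistent. The lemma as stated needs an additional nondegeneracy hypothesis on $z_1,\dots,z_m$ (equivalently, consistency of those intermediate systems, which holds generically), and that hypothesis has to be invoked precisely where you currently have a placeholder. The portions of your argument that are complete --- the dimension count giving $\dim L_j=1$, the equivalence between ``$u^Tz_1=\dots=u^Tz_j$'' and ``$u\perp M_j$'', the automatic nonvanishing of $h_m$ in case (i) and of $h_{m-1}$ in case (ii), and the back-substitution uniqueness argument --- are all correct.
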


\begin{proof}
  Existence: By hypothesis of linear independence, in case of (i), the system of linear equations
  \begin{eqnarray}\label{eq:proof1-lem-simplex}
    w^Tz_i = 1,&& 1\leq i \leq m 
  \end{eqnarray}
  has a unique solution $0_m\neq w \in \RR^m$. Set $u_m:= w/\|w\|$ and $h_m := \|w\|$.
  
  Similarly, in case of (ii), the system of linear equations 
  \begin{eqnarray}\label{eq:proof2-lem-simplex}
    w^Tz_i = 0,&& 1\leq i \leq m-1
  \end{eqnarray}
  is solved by a one dimensional linear subspace spanned by a unique $w\in \SSS^{m-1}$ with $w^Tz_m > 0$. We set $u_m:= \epsilon w$ and $h_m := 0$ where $\epsilon \in \{-1,1\}$ will be determined in the last step.
  
  Then, for both cases (i) and (ii), iteratively for $j =m-1,\ldots,2$, observe that the system of equations
  \begin{eqnarray}\label{eq:proof3-lem-simplex}
    &w^Tz_i = 1,~~ 1\leq i \leq j,\quad w^Tu_i=0,~~ j+1\leq i \leq m
  \end{eqnarray}
  has a unique solution $0_m\neq w \in \RR^m$. Set $v_j:= w/\|w\|$ and $h_j=\|w\|$. In case of (ii) also solve the above system for $j=1$ to obtain unique $u_1$ and $h_1>0$.

  In case of (i), $u_1$ is uniquely determined by requiring that $(u_1,\ldots,u_m) \in SO(m)$. In case of (ii), $\epsilon \in \{0,1\}$ is uniquely determined such that $(u_1,\ldots,u_{m-1},\epsilon u_m) \in SO(m)$, and then, write $u_m$ for $\epsilon u_m$.

  Uniqueness: Suppose that there were other $U'=(u'_1,\ldots,u'_m) \in SO(m)$ and $h'_2,\ldots,h'_m >0$ (Case (i)) or $h'_1,\ldots,h'_{m-1} >0$ (Case (ii)) with the asserted properties. Then, $w = u'_m/h'_m$ uniquely solves (\ref{eq:proof1-lem-simplex}) in case of (i), and solves (\ref{eq:proof2-lem-simplex}) in case of (ii), thus yielding $u'_m = u_m$ and $h_m = h'_m$ (Case (i)), and $u'_m = \epsilon' u_m$ with $\epsilon \in \{-1,1\}$ (Case (ii)). 

  Further, iteratively for $j=m-1,\ldots,2$, we have that $u'_j/h'_j$ solves (\ref{eq:proof2-lem-simplex}), and by uniqueness conclude $u'_j =u_j$, $h'_j= h_j$. Uniqueness extends to $j=1$ in case of (ii) and thus $\epsilon=1$ if $\det(U')=1$. In case of (i), similarly $u'_1 =u_1$ follows from $\det(U')=1$, completing the proof.
\end{proof}

\begin{rem}\label{rem:simplex}
  With $U=(u_1,\ldots,u_m) \in SO(m)$, $z_1,\ldots,z_m \in \RR ^m$ linearly independent and $h_1,h_2,\ldots,h_m$ from Lemma \ref{lem:simplex}, for $j=2,\ldots,m-2$,
  $$ h_j = u_j^Tz_j > 0$$
  is the \emph{height} (the distance from the origin) for the simplex $\Delta^{j-1}$ spanned by $z_1,\ldots,z_j$. The height of $\Delta^{0}$ spanned by $z_1$ is $|h_1|$ (Case (i)) and $h_1=u_1^Tz_1$ (Case (ii)). Only in Case (i), $h_m = u_m^Tz_m$ is the height of $\Delta^{m-1}$.

  For $j=1,\ldots, m-1$, for the following, we set 
  $$U_j :=(u_1,\ldots,u_j,0_m,\ldots,0_m)\in \RR^{m\times m}$$ 
  and note that 
  $\|U_j^Tz_j\| = \sqrt{\|z_j\|^2-h_{j+1}^2 - \ldots - h_m^2}$ for $j=1,\ldots, m-2$.
\end{rem} 
  
For multicentred coordinates $Z=(z_1,\ldots,z_{k-1})$, with the above $U$, in principle, simplex MPP coordinates are given by the columns of $U^TZ$. We distinguish two types.

\begin{defi}
  Suppose that $Z=(z_1,\ldots,z_{k-1})$ with $z_j \neq 0_m$ for all $1\leq j \leq k-1$ are multicentred coordinates of a configuration matrix $X$ obtained through $A$ as in Definition \ref{def:multicentring}. Moreover, assume that $U=(u_1,\ldots,u_m)\in SO(m)$ and $h_1,\ldots, h_m$ where $h_2,\ldots,h_{m-1} > 0$ and $h_1\in \RR$, $h_m>0$ (Case (i)) or $h_1>0$, $h_m=0$ (Case (ii)) have been obtained from $(z_1,\ldots,z_m)$ by Lemma \ref{lem:simplex}. With the notation of $U_j$ from Remark \ref{rem:simplex}, setting in Case (i),   
  $$\zeta_{j-1} := \left\{\begin{array}{rcl}
     U_j^Tz_j& \mbox{for}&2\leq j\leq m \\
     U^Tz_j& \mbox{for}& m+1\leq j\leq k-1
  \end{array}\right.\mbox{ and } 
  s_{j-1} := \left\{\begin{array}{rcl}
     \|U_j^Tz_j\|& \mbox{for}&2\leq j\leq m \\
     \|z_j\|& \mbox{for}& m+1\leq j\leq k-1
  \end{array}\right.
  \,,$$
  then
  \begin{align*}
    \big(h_1,(s_1,\zeta_1),&\ldots, (s_{m-1},\zeta_{m-1}),(s_{m},\zeta_{m}),\ldots,(s_{k-2},\zeta_{k-2}) \big)\\
    &\in \RR \times \prod_{j=1}^{m-1} (\RR_+ \times \SSS^j_+) \times (\RR_+ \times \SSS^{m-1})^{k-m-1}
  \end{align*} 
  are \emph{simplex MPP coordinates of Type 1}, 
  
  setting in Case (ii),    
  $$\zeta_{j-1} := \left\{\begin{array}{rcl}
      U_j^Tz_j& \mbox{for}&2\leq j\leq m-1 \\
      U^Tz_j& \mbox{for}& m\leq j\leq k-1
    \end{array}\right.\mbox{ and } 
    s_{j-1} := \left\{\begin{array}{rcl}
      \|U_j^Tz_j\|& \mbox{for}&2\leq j\leq m-1 \\
      \|z_j\|& \mbox{for}& m\leq j\leq k-1
    \end{array}\right.
  \,,$$
  then
  \begin{align*}
    \big(h_1,(s_1,\zeta_1),&\ldots, (s_{m-2},\zeta_{m-2}),(s_{m-1},\zeta_{m-1}),\ldots,(s_{k-2},\zeta_{k-2}) \big)\\
    &\in \RR_+ \times \prod_{j=1}^{m-2} (\RR_+ \times \SSS^j_+) \times (\RR_+ \times \SSS^{m-1})^{k-m-1}
  \end{align*} 
  are \emph{simplex MPP coordinates of Type 2}. Figure \ref{fig:polysphere-m-4} illustrates the the choice of $U$ for $m=4$.
\end{defi}

\begin{figure}
  \centering
  \includegraphics[width=0.45\linewidth]{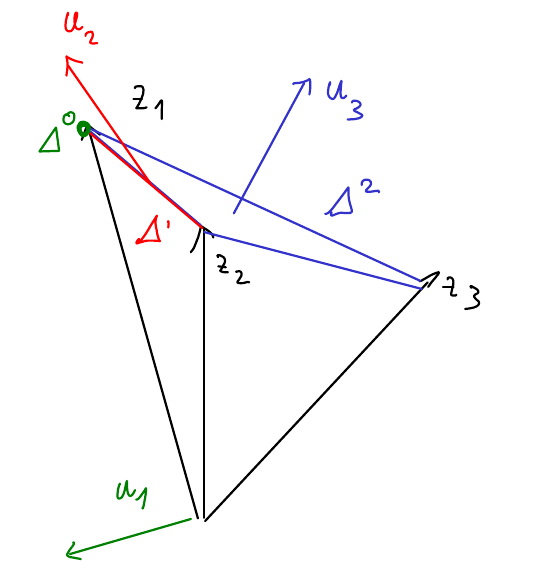}
  \caption{Simplex MPP coordinates in case of $m=4$. For Type 1 coordinates, $u_4$ is orthogonal to the simplex spanned by $z_1,\ldots,z_4$ (not depicted) and for Type 2 coordinates, $u_4$ is orthogonal to each of $z_1,z_2,z_3$. For both types $u_3$ is orthogonal to the simplex (triangle) $\Delta^2$ spanned by $z_1,z_2,z_3$, and $u_2$ is orthogonal to the simplex (linear segment) $\Delta^1$ spanned by $z_1,z_2$, determining $u_1$ such that $(u_1,\ldots,u_4) \in SO(4)$ (Type 1), and such that $u^T_1z_1>0$ (Type 2). \label{fig:polysphere-m-4}}
\end{figure}

\begin{rem}
  Due to Remark \ref{rem:simplex}, instead of $s_1,\ldots,s_{m-1}$ (Case (i)) equivalently the simplex heights $h_2,\ldots,h_m$ can be used and instead of $s_1,\ldots,s_{m-2}$ (Case (ii)) equivalently the simplex heights $h_2,\ldots,h_{m-1}$ can be used. 
\end{rem}

In conclusion, we give several examples of simplex MPP coordinates, first without constraints, then with.

\begin{eg}
  For simplicity in the first three examples below we chose the centring matrix $A=I_k-e_1e_1^T$.
  \begin{enumerate}
    \item Three points $x_1,x_2,x_3 \in \RR^2$ resulting in $z_0 = x_1-x_1, z_1 = x_2-x_1$ and $z_2=x_3-x_2$. For simplex MPP coordinates of Type 1 we have $ u_2 \perp z_2-z_1$ and consequently $u_1 := \epsilon \frac{z_2-z_1}{\| z_2-z_1\|}$ with suitable $\epsilon \in \{0,1\}$, yielding
    $$h_1= u_1^Tz_1,~ s_1=\|z_2\|,~ \zeta_1 = \left(\begin{array}{c} u_1^T\\ u_2^T \end{array}\right)\,\frac{z_2}{\|z_2\|}\mbox{ taking values in }\RR\times \RR_+ \times \SSS^1_+\,,$$
    while for Type 2 we have $u_2 \perp z_1$ and  consequently $u_1 := \frac{z_1}{\|z_1\|}$, yielding 
    $$h_1= u_1^Tz_1,~ s_1=\|z_2\|,~\zeta_1 = \left(\begin{array}{c} u_1^T\\ u_2^T \end{array}\right)\,\frac{z_2}{\|z_2\|}\mbox{ taking values in }\RR^2_+ \times \SSS^1\,.$$
    \item Four points $x_1,x_2,x_3,x_4 \in \RR^2$ resulting in $z_0 = x_1-x_1, z_j = x_{j+1}-x_1$ for $j=1,2,3$. For simplex MPP coordinates we have the  rotations $U=(u_1,u_2) \in SO(1)$ different (!) for Type 1 and Type 2, as above, now, respectively, adding the coordinates
    $$s_2 = \|z_3\|,~ \zeta_2 = \left(\begin{array}{c} u_1^T\\ u_2^T \end{array}\right)\,\frac{z_3}{\|z_3\|}\mbox{ taking values in } \RR_+ \times \SSS^1\,.$$
    \item Four points $x_1,x_2,x_3,x_4 \in \RR^3$ resulting in $z_0 = x_1-x_1, z_j = x_{j+1}-x_1$ for $j=1,2,3$. For simplex MPP coordinates of Type 1 we have $u_2, u_3 \perp z_2-z_1$ and $u_3 \perp z_3-z_2$. Hence, again, we have $u_1 := \epsilon \frac{z_2-z_1}{\|  z_2-z_1\|}$ with suitable $\epsilon \in \{0,1\}$, yielding
    \begin{align*}
      h_1&= u_1^Tz_1,~ &&s_1=\|U_2^Tz_2\| =\sqrt{\|z_2\|^2 - h_3^2}, &&\zeta_1 &= \left(\begin{array}{c}
      u_1^T\\ u_2^T \\0 \end{array}\right)\,\frac{z_2}{\|U_2^Tz_2\|}\,,\\ 
      s_2&=\|z_3\|\,,
      &&\zeta_2 = \left(\begin{array}{c}
      u_1^T\\ u_2^T \\u_3^T \end{array}\right)\,\frac{z_3}{\|z_3\|^2}\,,
    \end{align*}
    taking values in 
    $$\RR\times \RR_+ \times \SSS^1_+ \times \RR_+ \times \SSS^2_+ \,.$$
    For simplex MPP coordinates of Type 2 we have $u_3 \perp z_1,z_2$, $u_2 \perp z_2-z_1$. Again, consequently $u_1 := \epsilon \frac{z_2-z_1}{\|  z_2-z_1\|}$ with suitable $\epsilon \in \{0,1\}$, yielding
    $$h_1= u_1^Tz_1,~ s_1=\|z_2\|,~ \zeta_1 = \left(\begin{array}{c}
    u_1^T\\ u_2^T \\0 \end{array}\right)\,\frac{z_2}{\|z_2\|},~s_2=\|z_3\|,~ \zeta_2 = \left(\begin{array}{c}
    u_1^T\\ u_2^T \\u_3^T \end{array}\right)\,\frac{z_3}{\|z_3\|^2}\,,$$
    now taking values (note that $u_3^Tz_2 =0$ but $u_3^Tz_3\neq 0$) in 
    $$\RR_+\times \RR_+ \times \SSS^1_+ \times \RR_+ \times \SSS^2 \,.$$
    \item Recall the multicentred five-point-representation for low resolution RNA backbone from Example \ref{eg:multicentring}, see also Figures \ref{RNA} and \ref{fig:5-MUCEPOPOS}. In the following we consider $z_1,\ldots,z_4 \in \RR^m$ after multicentring from Example \ref{eg:multicentring}, for planar configurations ($m=2$) and proper three-dimensional configurations ($m=3$). Picking the different $U\in SO(m)$ and $h_1,\ldots,h_m$ as in the previous three examples, we arrive at the following.
    \begin{description}
      \item[$m=2$:] See Figures \ref{fig:4pts-in5-rep} and \ref{fig:2D5-pts-F}. From the unconstrained simplex MPP coordinates (having different values for Type 1 and Type 2) 
      $$h_1= u_1^Tz_1,~ s_j=\|z_{j+1}\|,~\zeta_j = \left(\begin{array}{c} u_1^T\\ u_2^T \end{array}\right)\,\frac{z_{j+1}}{\|z_{j+1}\|}\mbox{ for } j=1,2,3\,,$$ 
      taking into account that the bond lengths $s_2=\|z_3\|$ and $s_3=\|z_4\|$ are fixed and hence omitted, constrained simplex MPP coordinates of Type 1 take values in
      $$\RR \times \RR_+ \times \SSS_+^1 \times (\SSS^1)^2$$
      while the same of Type 2 take values in 
      $$\RR_+^2 \times (\SSS^1)^3\,.$$
      \item[$m=3$:] See Figure \ref{fig:3D5-pts-F}. From the unconstrained simplex MPP coordinates of Type 1,
      \begin{align*}
        h_1&= u_1^Tz_1,~ &&s_1=\|U_2^Tz_2\| =\sqrt{\|z_2\|^2 - h_3^2}, &&\zeta_1 = \left(\begin{array}{c}
        u_1^T\\ u_2^T \\0 \end{array}\right)\,\frac{z_2}{\|U_2^Tz_2\|}\,,\\ 
        s_j&=\|z_{j+1}\|\,,
        &&\zeta_j = \left(\begin{array}{c}
        u_1^T\\ u_2^T \\u_3^T \end{array}\right)\,\frac{z_{j+1}}{\|z_{j+1}\|^2}&&j=2,3\,,
      \end{align*}
      taking into account that the bond lengths $s_2=\|z_3\|$ and $s_3=\|z_4\|$ are fixed and hence omitted, constrained simplex MPP coordinates of Type 1 take values in
      $$(h_1,s_1,\zeta_1,\zeta_2,\zeta_3) \in \RR \times \RR_+ \times \SSS_+^1 \times \SSS^2_+ \times \SSS^2$$
      while the same of Type 2 take values in 
      \begin{eqnarray}\label{eq:sMMP-Type2}
        (h_1,s_1,\zeta_1,\zeta_2,\zeta_3) &\in& \RR_+^2 \times \SSS_+^1 \times (\SSS^2)^2\,.
      \end{eqnarray} 
    \end{description}
  \end{enumerate}    
  Note that in Type 1 coordinates $h_1$ takes the full range $(-\infty, \infty)$, while in Type 2 it is restricted to positive values. We will be concentrating on Type 2 only, which we will refer to as simplex MPP coordinates (unless there is any ambiguity) Further, \cite{wiechers2025RNAPrecis} utilize these latter constrained simplex MPP coordinates of Type 2.
\end{eg}

\section{Summary Statistics and  Data Analysis for RNA}\label{Summary}

\begin{figure}
  \centering
  \includegraphics[width=0.48\textwidth]{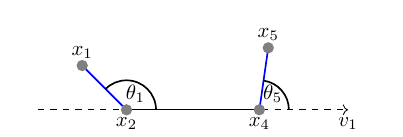}
  \includegraphics[width=0.48\textwidth]{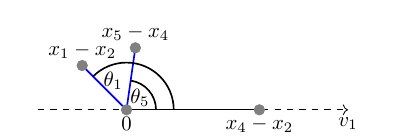}
  \caption{Left: four out of the five landmarks in 2D where $x_1$ and $x_5$ vary along circles of fixed radii about $x_2$ and $x_4$, respectively. Right: in multicentred polysphere coordinates. The two blue lines are of fixed constant lengths in both panels. \label{fig:4pts-in5-rep}}
\end{figure}

\begin{figure}
  \centering
  \includegraphics[width=1.0\textwidth]{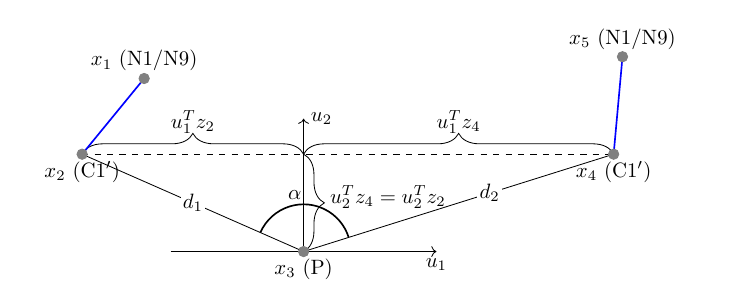}
  \caption{Turning 5 landmarks in a 2D plane into  multicentred (no common origin at $x_3$ and the base point of the blue lines yet depicted) generalized polycone coordinates. The two blue lines are of constant length. As this example is motivated by RNA suite analysis, we have included atom labels. \label{fig:2D5-pts-F}}
\end{figure}

\begin{figure}
  \centering
  \includegraphics{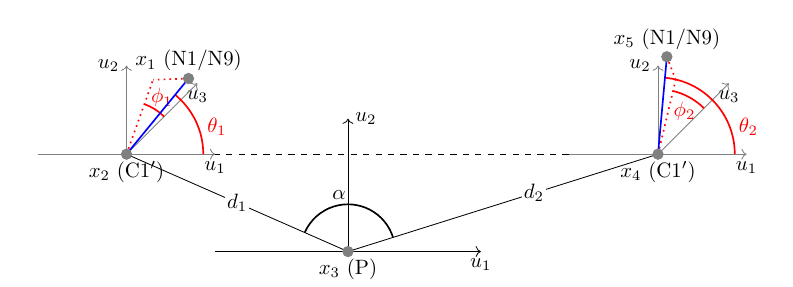}
  \caption{Similar to Figure \ref{fig:2D5-pts-F}, now turning 5 landmarks in a 3D space as occurring in low detail RNA representation into  multicentred (no common origin at $P$ and the base point of the blue lines yet depicted) polycone coordinates. Note the 2D simplex spanned by the 3 inner landmarks. The two blue lines are of constant length.  As this example is motivated by RNA suite analysis, we have included atom labels. \label{fig:3D5-pts-F}}
\end{figure}
We start with five landmarks so the Simplex MPP Type 2 coordinates are given by 
\begin{eqnarray}\label{eq:five-pt-coordinates}
  (h_1,s_1,\zeta_1,\zeta_2,\zeta_3) \in \RR_+\times (\RR_+ \times \SSS_+^1) \times (\SSS^2)^2.
\end{eqnarray}

We now write
$$h=d_1,s_1=d_2,\alpha,\zeta_2^T=\ell_1^T=(\theta_1,\phi_1),\;\zeta_3^T =\ell_2^T=(\theta_2, \phi_2)$$ 
Figure \ref{fig:3D5-pts-F} shows these. Note that the second and third variables $(d_2,\alpha)$ can be written in the Cartesian coordinates as 
 
\begin{equation}\label{eq:halfplane} 
  x=d_2 \cos \alpha, y=d_2\sin \alpha, 0<\alpha<\pi, 
\end{equation}
so it is a half-plane in 2D or the domain is $x>0, -\infty<y<\infty$. The rest are points on a two spheres in 3D. We thus have 7 variables
\begin{equation}
  W= (d_1,d_2, \alpha,\theta_1,\phi_1,\theta_2,\phi_2) \in (\RR_+^2, (0,\pi), (0,\pi), (0,2\pi),(0,\pi), (0,2\pi))^T .
\end{equation}

There is no known distribution of the random vector $W$ on such mixed variables. However, if we assume that data is concentrated (which we show we have in our example of the RNA data), we can make progress by using tangent approximations (see below, the Lambert azimuthal equal-area projection) for the spherical variables $\ell_1^T$ and $\ell_2^T$ where as the other three variables $d_1,d_2, \alpha$ are Euclidean as $\alpha$ is also so a linear variable not on the full circle.

We now introduce our variables for any given concentrated data as a multivariate normal vector $\mb{x}=(x_1, x_2, x_3, x_4, x_5, x_6, x_7)^T$ with 7 variables defined by 
\begin{equation}\label{7x}
  x_1=d_1, x_2=d_2, x_3=\alpha,x_4=\theta_1,x_5=\phi_1,x_6=\theta_2,x_7=\phi_2
\end{equation}
with now $x_4=\theta_1,x_5=\phi_1$ and $x_6=\theta_2,x_7=\phi_2$ are coordinates on the sphere after rotating the central direction, defined in equations \eqref{xbar} and \eqref{Rbar}, to the north pole, i.e. the $e_3$ direction.
 
Note that to assess whether a spherical data set is concentrated, we proceed as follows. Let ${\bf x}_1, \dots, {\bf x}_n, $ be $n$ points on $\SSS^2$. Then the location of these points can be summarised by their sample mean vector in $\RR^{3}$, which is
\begin{equation} \label{xbar}
  {\bf {\bar x}} = \frac{1}{n} \sum _{i=1}^n{\bf x} _i.
\end{equation}
Write the vector $\bar{\bf x}$ 

\begin{equation} \label{Rbar}
  {\bf {\bar x}} = {\bar R}{\bf {\bar x}}_0, \; 0\le {\bar R} \le 1,
\end{equation}
where ${\bf {\bar x}}_0$ is the sample {\bf mean direction} and ${\bar R}\; (= \lVert{\bf {\bar x}}\rVert)$ is the \textbf{mean resultant length}. Note that ${\bf {\bar x}}$ is the centre of gravity with direction ${\bf {\bar x}}_0$, and ${\bar R}$ is its distance from the origin. If ${\bar R}$ is nearly 1 then the data is concentrated and a formal test is normally carried out under the Fisher distribution which is a test of uniformity. The probability density function of the Fisher distribution in 3D as 
\begin{equation}\label{Fisher3D}
  f({\bf x}) = B(\kappa) \exp(\kappa \boldsymbol{\mu}^T {\bf x}), \quad {\bf x} \in \mathbb{R}^{3},\kappa \geq 0, \boldsymbol{\mu}^T\boldsymbol{\mu}=1, {\bf x}^T{\bf x}=1,
\end{equation}
where, 
$$B(\kappa) = \kappa/(2 \sinh \kappa),$$ 
$\kappa$ is the concentration parameter and $\mu$ is the population mean direction. The pdf is with respect to the Lebesgue measure. The uniformity test is based on $H_0: \kappa=0$ vs $H_0: \kappa \ge 0$. We can also calculate the $p-$values. For further details on the Fisher distribution and data analysis, see, for example, \citet{mardiajupp2000}.

A different avenue has been followed in \cite{wiechers2025RNAPrecis} where the Fr\'echet mean (a combination of Euclidean and intrinsic spherical minimization) has been computed for the 7-dimensional data. For concentrated data the following projection approach would yield very similar outcome.
 
{\bf Projection} Let $(\theta,\phi)$ be the spherical polar coordinates, then the Lambert azimuthal equal area projection (see, for example, \citet{mardiajupp2000}, p. 160) is given by 
\begin{equation}\label{Proj}
 (x,y)= (2\sin(\theta/2) \cos\phi, 2\sin(\theta/2) \sin\phi) 
\end{equation}
 
The data is first rotated so its mean direction is say north pole and carry out the analysis assuming these are drawn from a bivariate normal distribution. Also when plotted the data with the coordinates $(x,y)$ the sample mean direction is at the origin. A convenient choice of $(\theta '_i, \phi '_i)$ is as the spherical polar coordinates of is the rotation which takes the sample mean direction ${\bar x}_0$ to the north pole ${\bf n} = (0,0,1)^T$, (see,\citet{mardiajupp2000}). Let $(\theta ''_i, \phi ''_i)$ denote the spherical polar coordinates of $x_i$ in 
the coordinate system in which the sample mean direction ${\bar x}_0$ has spherical polar coordinates $(\theta '', \phi '') = (\pi /2, 0)$. Define the rotation matrix $A$ by 
$$
A = \left(\begin{array}{ccc}
\sin {\hat \alpha} \cos {\hat \beta} & 
\sin {\hat \alpha} \sin {\hat \beta} & \cos {\hat \alpha} \\
\sin {\hat \beta} & - \cos {\hat \beta} & 0 \\
\cos {\hat \alpha} \cos {\hat \beta} & 
\cos {\hat \alpha} \sin {\hat \beta} & - \sin {\hat \alpha} 
\end{array}\right),
$$
where
$$
{\bar x}_0 = \left(\begin{array}{c}
\sin {\hat \alpha} \cos {\hat \beta} \\
\sin {\hat \alpha} \sin {\hat \beta} \\
\cos {\hat \alpha} 
\end{array}\right).
$$
The spherical polar coordinates $(\theta ''_i, \phi ''_i)$ are defined by
$$
A x _i = \left(\begin{array}{c}
\sin \theta ''_i \cos \phi ''_i \\
\sin \theta ''_i \sin \phi ''_i \\
\cos \theta ''_i
\end{array}\right),
\label{LFthetphi''}
$$
with $\phi ''_i$ in the range $(- \pi, \pi ]$.

Let ${\bf x}_1, \dots, {\bf x}_n, $ be $n$ points on $S_2$ with the mean direction ${\bf {\bar x}}_0$ then we can rotate the data so that the data is centred about the mean direction. Let $A={\bf {\bar x}}_0 {\bf {\bar x}}_0^T$; let $B$ be the matrix of the eigenvectors of $A$ with the first eigenvector ${\bf b}_1$ proportional to ${\bf {\bar x}}_0$ and if $\quad {\bf {\bar x}}_0{\bf b}_1^T<0 \quad \text{then}\quad {\bf b}_1=-{\bf b}_1\quad \text {and if} \quad (\det(B)<0) \quad \text{then}\quad {\bf b}_2=-{\bf b}_2$ then the data is centred using $Y$= $XB$ where $X$ is the data matrix.

To get vector of longitudes and colatitudes after rotations of $X$, that is, from $Y$ $\phi=\arctan_2({\bf y}_3,{\bf y}_2), \theta  =\arccos({\bf y}_1)$ where $ Y=({\bf y}_1,{\bf y}_2,{\bf y}_3)$.

\subsection{RNA Example} 

Recall from the introduction that we have five landmarks (atoms) of the RNA Structure \\
1: N1/N9, 2: C1', 3: P, 4: C1', 5: N1/N9\\
where P is Phosphorous, C1' is Carbon and N1/N9 is Nitrogen; carbon and nitrogen are on two sides of phosphorous. The length between 1 and 2, and between 4 and 5 are fixed. The atoms P and C1' are on the backbone of the RNA whereas the atoms N1/N9 are part of the base of the RNA (see, Figure \ref{RNA} and Figure \ref{fig:3D5-pts-F}.)

We will analyse two illustrative RNA (Cluster 2 and Cluster 3 from P33) five-landmark data sets from low resolution ($\gtrapprox 2.5$\AA{}) experimental maps of electron density of the RNA structure (part of a cluster analysis, see our RNA paper \citet{wiechers2025RNAPrecis})

\begin{figure}[ht!]
  \centering
  \includegraphics[width=0.95\linewidth]{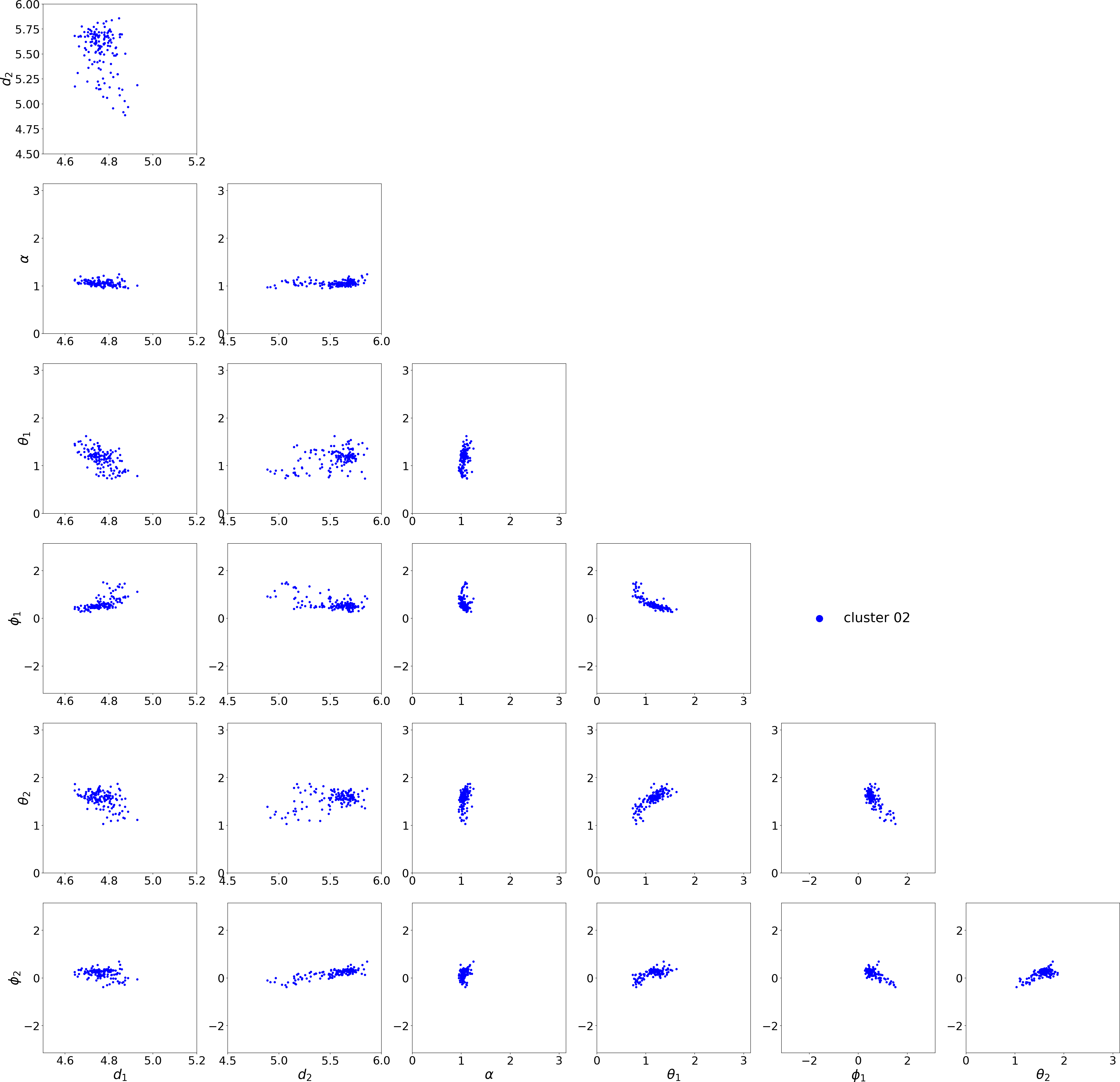}
  \caption{Scatter plots for Data 1 for seven Simplex MMP variables $W= (d_1,d_2, \alpha,\theta_1,\phi_1,\theta_2,\phi_2)$ with $n=146$. \label{fig:cluster2}}
\end{figure}

\textbf{Data 1 Analysis} We have the sample size $n$=146 for this data. Our $x_1, x_2$ are in \AA{}, $x_3$ is in radians (though in the original data, this was in degrees), $x_4,x_5,x_6,x_7$ original were in radians pre-projection. 

We give some relative indications of how this data is concentrated. We have 
$$\text{var}(x_1)=0.003, \text{var}(x_2)=0.045, \text{var}(x_3)=0.003. $$
Further, the mean resultant lengths of the spherical variables (pre-projections) are
$${\bar R}_1=0.960, {\bar R}_2=0.973 $$
so these are highly concentrated (we could apply the standard test of uniformity which would show that the hypothesis will be rejected with high confidence).

In fact,
$$\text{var}(x_4)=0.035, \text{var}(x_5)=0.064, \text{var}(x_6)=0.026, \text{var}(x_7)=0.030.$$
Further, each of the two original spherical data were rotated about their mean directions given respectively by 
$$ (0.73, 0.48, 0.40),\; ( 0.95, 0.20, 0.02).$$

We now give some more summary statistics with some general comments. The sample mean vector ${\bar x}$ of $x$ is given by 
$${\bar x}^T= (4.77, 5.54,1.06, 1.15,0.61,1.55,0.20).$$

Before investigating covariances, we represent the $\phi_j, \theta_j$ angles as points in $\mathbb{S}^2 \subset \mathbb{R}^3$, rotate such that the extrinsic mean is on the $x_1$ axis, and then calculate the Lambert azimuthal equal-area projection. The resulting sample covariance matrix $S$ of the thus transformed $x$ is given by 

\begin{align}
  \begin{split}
    S = \begin{pmatrix}
      \phantom{-}0.003 & -0.003 & -0.001 & \phantom{-}0.009 & \phantom{-}0.002 &-0.002 &\phantom{-}0.004 \\
      -0.003 & \phantom{-}0.045 & \phantom{-}0.002 & \phantom{-}0.032 &-0.005 & \phantom{-}0.028 & \phantom{-}0.017 \\
      -0.001 & \phantom{-}0.002 & \phantom{-}0.003 & -0.003 & -0.002 & \phantom{-}0.002 & -0.003 \\
      \phantom{-}0.009 & -0.032 & -0.003 & \phantom{-}0.071 & \phantom{-}0.017 & -0.028 & \phantom{-}0.037 \\
      \phantom{-}0.002 & -0.005 & -0.002 & \phantom{-}0.017 & \phantom{-}0.009 & -0.008 & \phantom{-}0.012 \\
      -0.002 & \phantom{-}0.028 & \phantom{-}0.002 & -0.028 & -0.008 & \phantom{-}0.025 & -0.018 \\
      \phantom{-}0.004 & -0.017 & -0.003 & \phantom{-}0.037 & \phantom{-}0.012 & -0.018 & \phantom{-}0.029
    \end{pmatrix}
  \end{split}
\end{align}

Note that the eigenvalues $0.134, 0.030, 0.010, 0.005, 0.003, 0.002, 0.001$ (which are scale dependent) lead to a determinant $|S|= 1.24 \times 10^{-15} $, which is very small but $S$ is not singular.

The sample correlation matrix $R$ of $x$ is given by 

\begin{align}
  \begin{split}
    COR = \begin{pmatrix}
      \phantom{-}1.00 & -0.30 & -0.24 & \phantom{-}0.66 & \phantom{-}0.37 & -0.25 & \phantom{-}0.44\\
      -0.30 & \phantom{-}1.00 & \phantom{-}0.19 & -0.56 & -0.25 & \phantom{-}0.82 & -0.46\\
      -0.24 & \phantom{-}0.19 & \phantom{-}1.00 & -0.22 & -0.35 & \phantom{-}0.20 & -0.31\\
      \phantom{-}0.66 & -0.56 & -0.22 & \phantom{-}1.00 & \phantom{-}0.66 & -0.67 & \phantom{-}0.81\\
      \phantom{-}0.37 & -0.25 & -0.35 & \phantom{-}0.66 & \phantom{-}1.00 & -0.55 & \phantom{-}0.72\\
      -0.25 & \phantom{-}0.82 & \phantom{-}0.20 & -0.67 & -0.55 & \phantom{-}1.00 & -0.67\\
      \phantom{-}0.44 & -0.46 & -0.31 & \phantom{-}0.81 & \phantom{-}0.72 & -0.67 & \phantom{-}1.00
    \end{pmatrix}
  \end{split}
\end{align}

Note that $|COR|= 0.0070 $ so that $COR$ is not singular but it is very small.
 
Since, the variables are in different units, the PCA on $R$ would be more information than of $S$. The eigenvalues of $R$ are 
$$ 3.92, 1.05, 0.84, 0.72, 0.26, 0.12, 0.09, $$
and the percentage contributions are
$$ 56.1, 15.0, 12.0, 10.3, 3.7, 1.7, 1.3, $$
so there is no real low dimensional reduction. The principal components are given by the following but there is no clear pattern.
\begin{align}
    \begin{split}
        \Gamma = \begin{pmatrix}
            \phantom{-}0.31 & \phantom{-}0.35 & -0.53 & \phantom{-}0.56 & -0.23 & -0.34 & \phantom{-}0.12\\
            -0.35 & \phantom{-}0.55 & -0.20 & -0.40 & \phantom{-}0.11 & \phantom{-}0.03 & \phantom{-}0.59\\
            -0.21 & -0.53 & -0.74 & -0.33 & -0.08 & -0.10 & -0.08\\
            \phantom{-}0.46 & \phantom{-}0.04 & -0.30 & \phantom{-}0.00 & \phantom{-}0.23 & \phantom{-}0.80 & \phantom{-}0.06\\
            \phantom{-}0.39 & \phantom{-}0.28 & \phantom{-}0.04 & -0.56 & -0.61 & -0.02 & -0.28\\
            -0.42 & \phantom{-}0.45 & -0.20 & \phantom{-}0.08 & \phantom{-}0.19 & \phantom{-}0.15 & -0.72\\
            \phantom{-}0.44 & \phantom{-}0.08 & -0.04 & -0.31 & \phantom{-}0.68 & -0.46 & -0.14
        \end{pmatrix}
    \end{split}
\end{align}

\begin{figure}[ht!]
  \centering
  \includegraphics[width=0.95\linewidth]{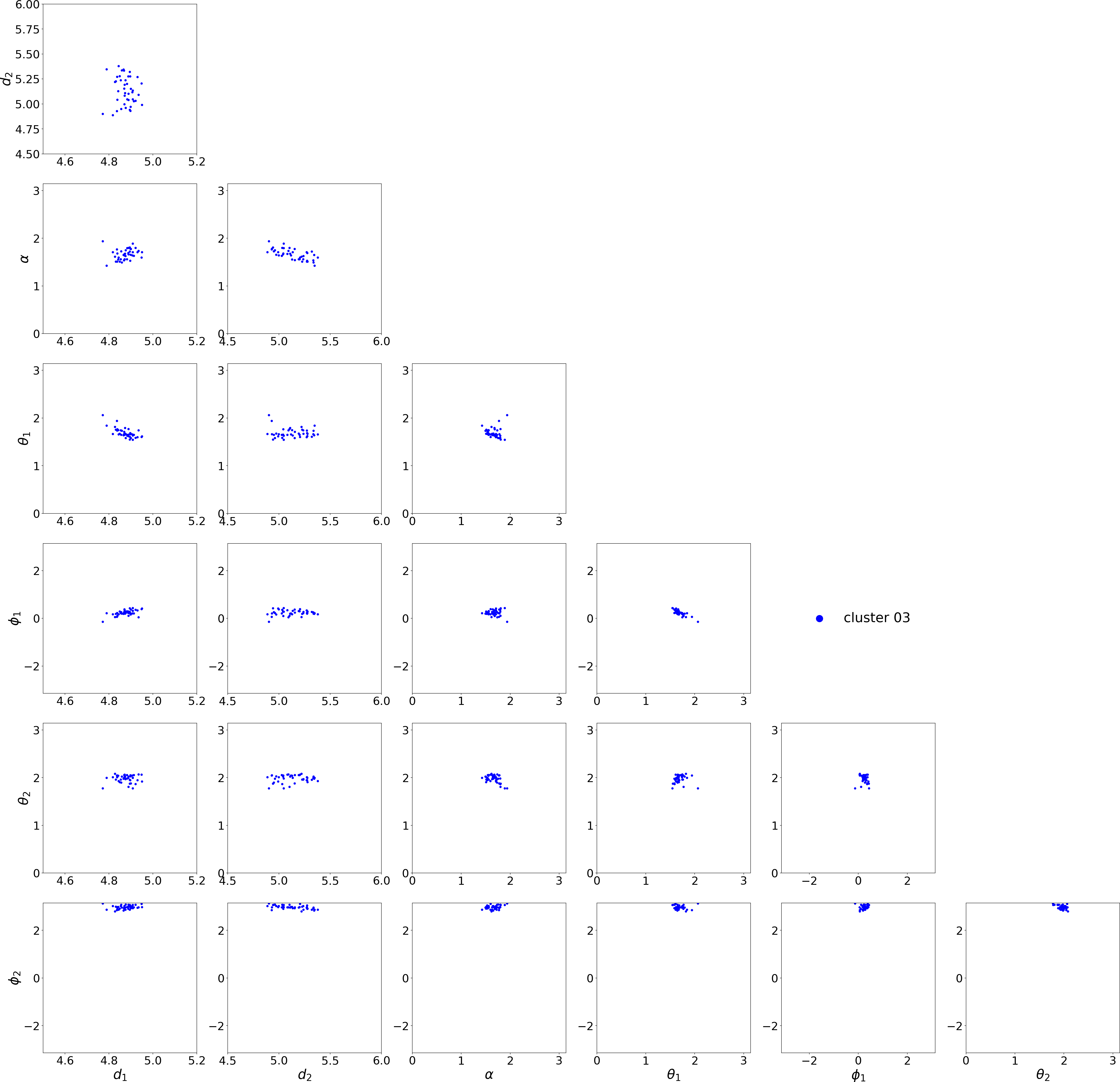}
  \caption{Scatter plots for Data 2 for seven Simplex MMP variables $W= (d_1,d_2, \alpha,\theta_1,\phi_1,\theta_2,\phi_2)$ with $n=44$.\label{fig:cluster3}}
\end{figure}

{\bf Data 2 Analysis} We have the sample size $n$=44 for this data. Our $x_1, x_2$ are in \AA{}, $x_3$ is in radians (though in the original data, this was in degrees), $x_4, x_5, x_6, x_7$ original were in radians pre-projection. 

We give some relative indications of how this data is concentrated. We have 
$$\text{var}(x_1)=0.001, \text{var}(x_2)=0.020, \text{var}(x_3)=0.012 $$
Further, the mean resultant lengths of the spherical variables (pre-projections) are
$${\bar R}_1=0.989, {\bar R}_2=0.995 $$
so these are highly concentrated. In fact,
$$\text{var}(x_4)=0.010, \text{var}(x_5)=0.012, \text{var}(x_6)=0.006, \text{var}(x_7)=0.006.$$
Further, each of the two original spherical data were rotated about their mean directions given respectively by 
$$ (0.95, 0.23, -0.12),\; ( -0.90, 0.16, -0.40).$$

We now give some more summary statistics with some general comments. The sample mean vector ${\bar x}$ of $x$ is given by 
$${\bar x}^T= (4.87, 5.13, 1.66, 1.69, 0.24, 1.98, 2.97).$$

Before investigating covariances, we represent the $\phi_j, \theta_j$ angles as points in $\mathbb{S}^2 \subset \mathbb{R}^3$, rotate such that the extrinsic mean is on the $x_1$ axis, and then calculate the Lambert azimuthal equal-area projection. The resulting sample covariance matrix $S$ of the thus transformed $x$ is given by

\begin{align}
  \begin{split}
    S = \begin{pmatrix}
      \phantom{-}0.001 & \phantom{-}0.000 & \phantom{-}0.001 & \phantom{-}0.001 & \phantom{-}0.003 & \phantom{-}0.000 & -0.001\\
      \phantom{-}0.000 & \phantom{-}0.020 & -0.011 & \phantom{-}0.000 & \phantom{-}0.001 & -0.003 & \phantom{-}0.004\\
      \phantom{-}0.001 & -0.011 & \phantom{-}0.013 & -0.001 & \phantom{-}0.000 & \phantom{-}0.005 & -0.001\\
      \phantom{-}0.001 & \phantom{-}0.000 & -0.001 & \phantom{-}0.005 & \phantom{-}0.006 & \phantom{-}0.001 & \phantom{-}0.000\\
      \phantom{-}0.003 & \phantom{-}0.001 & \phantom{-}0.000 & \phantom{-}0.006 & \phantom{-}0.017 & \phantom{-}0.001 & -0.001\\
      \phantom{-}0.000 & -0.003 & \phantom{-}0.005 & \phantom{-}0.001 & \phantom{-}0.001 & \phantom{-}0.008 & -0.001\\
      -0.001 & \phantom{-}0.004 & -0.001 & \phantom{-}0.000 & -0.001 & -0.001 & \phantom{-}0.004
    \end{pmatrix}
  \end{split}
\end{align}

Note that the eigenvalues $0.0299, 0.0207, 0.0079, 0.0047, 0.0023, 0.0011, 0.0004$ (which are scale dependent) lead to a determinant $|S|= 2.58 \times 10^{-17}$, which is very small but $S$ is not singular.

The sample correlation matrix $R$ of $x$ is given by 

\begin{align}
  \begin{split}
    COR = \begin{pmatrix}
      \phantom{-}1.00 & -0.07 & \phantom{-}0.23 & \phantom{-}0.37 & \phantom{-}0.68 & \phantom{-}0.04 & -0.23\\
      -0.07 & \phantom{-}1.00 & -0.67 & \phantom{-}0.04 & \phantom{-}0.07 & -0.27 & \phantom{-}0.53\\
      \phantom{-}0.23 & -0.67 & \phantom{-}1.00 & -0.07 & \phantom{-}0.03 & \phantom{-}0.54 & -0.14\\
      \phantom{-}0.37 & \phantom{-}0.04 & -0.07 & \phantom{-}1.00 & \phantom{-}0.69 & \phantom{-}0.18 & -0.08\\
      \phantom{-}0.68 & \phantom{-}0.07 & \phantom{-}0.03 & \phantom{-}0.69 & \phantom{-}1.00 & \phantom{-}0.11 & -0.09\\
      \phantom{-}0.04 & -0.27 & \phantom{-}0.54 & \phantom{-}0.18 & \phantom{-}0.11 & \phantom{-}1.00 & -0.25\\
      -0.23 & \phantom{-}0.53 & -0.14 & -0.08 & -0.09 & -0.25 & \phantom{-}1.00
    \end{pmatrix}
  \end{split}
\end{align}

Note that $|COR|= 0.0330 $ so that $R$ is not singular but it is very small.

Since, the variables are in different units, the PCA on $R$ would be more information than of $S$. The eigenvalues of $R$ are 
$$ 2.44, 2.01, 0.96, 0.79, 0.52, 0.19, 0.09, $$
and the percentage contributions are
$$ 34.9, 28.7, 13.7, 11.2, 7.4, 2.7, 1.3, $$
so there is no real low dimensional reduction. The principal components are given by the following but there is no clear pattern.
\begin{align}
    \begin{split}
        \Gamma = \begin{pmatrix}
            \phantom{-}0.42 & -0.32 & -0.17 & -0.52 & -0.39 & -0.43 & \phantom{-}0.28\\
            -0.38 & -0.45 & \phantom{-}0.25 & \phantom{-}0.10 & -0.53 & -0.12 & -0.53\\
            \phantom{-}0.41 & \phantom{-}0.39 & \phantom{-}0.36 & -0.40 & \phantom{-}0.10 & -0.05 & -0.61\\
            \phantom{-}0.33 & -0.45 & \phantom{-}0.13 & \phantom{-}0.41 & \phantom{-}0.53 & -0.46 & -0.14\\
            \phantom{-}0.39 & -0.50 & \phantom{-}0.04 & -0.07 & \phantom{-}0.03 & \phantom{-}0.76 & -0.05\\
            \phantom{-}0.37 & \phantom{-}0.23 & \phantom{-}0.54 & \phantom{-}0.47 & -0.44 & \phantom{-}0.01 & \phantom{-}0.32\\
            -0.35 & -0.19 & \phantom{-}0.68 & -0.40 & \phantom{-}0.29 & -0.01 & \phantom{-}0.36
        \end{pmatrix}
    \end{split}
\end{align}

{\bf Data 3 Analysis} As a third example, we consider the Mahalanobis distance and two-sample $T^2$ statistic (see, for example, \cite{MardiaKentTaylor2024}) for the two data sets Data 1 and Data 2 analysed above. Recall, for Data 1 for seven Simplex MMP variables $x^T= (d_1,d_2, \alpha,\theta_1,\phi_1,\theta_2,\phi_2)$ with $n=146$, the means are
$${\bar x}^T= (4.77, 5.54, 1.06, 1.15, 0.61, 1.55, 0.20)$$
and the corresponding variances are
$$ (0.003, 0.045, 0.003, 0.035, 0.064, 0.026, 0.030).$$

For Data 2, again for seven Simplex MMP variables $x^T= (d_1,d_2, \alpha,\theta_1,\phi_1,\theta_2,\phi_2)$ with $n=44$, the means are
$${\bar x}^T= (4.87, 5.13, 1.66, 1.69, 0.24, 1.98, 2.97)$$
and the corresponding variances are
$$ (0.001, 0.020, 0.012, 0.010, 0.012, 0.006, 0.006).$$

It can be seen that the means for Data 1 and Data 2 look similar except for the means of $\phi_2$ but also there is less variation in Data 2 though these all are small. In both cases, the variance for $d_1$ is smaller relative to the variance for $d_2$. Figures \ref{fig:cluster2} and \ref{fig:cluster3} indicate this behaviour though overall both data sets are concentrated and are different at least in the means.

One way to test the mean differences is to use the Hotelling $T^2$ test, which we now perform (assuming that the two population covariance matrices are equal). Since this requires a joint coordinate system, we perform the Lambert azimuthal equal-area projection on the pooled data set for the spherical coordinates as done above for the individual data sets.

This yields the mean vectors in centred coordinates
\begin{align*}
  \overline{y}_1 =&~ (-0.025, 0.093, -0.138, 0.130, 0.068, -0.182, 0.093)^T\\
  \overline{y}_2 =&~ (0.084, -0.315, 0.467, -0.435, -0.233, 1.651, -0.844)^T
\end{align*}

The pooled covariance matrix in the joint coordinates is
\begin{align}
  \begin{split}
    S_{\text{pooled}} = \begin{pmatrix}
      \phantom{-}0.003 & -0.003 & \phantom{-}0.000 & \phantom{-}0.008 & \phantom{-}0.002 & -0.003 & \phantom{-}0.002\\
      -0.003 & \phantom{-}0.040 & -0.001 & -0.024 & -0.006 & \phantom{-}0.025 & -0.001\\
      \phantom{-}0.000 & -0.001 & \phantom{-}0.005 & -0.002 & -0.001 & \phantom{-}0.003 & -0.001\\
      \phantom{-}0.008 & -0.024 & -0.002 & \phantom{-}0.056 & \phantom{-}0.017 & -0.031 & \phantom{-}0.015\\
      \phantom{-}0.002 & -0.006 & -0.001 & \phantom{-}0.017 & \phantom{-}0.010 & -0.012 & \phantom{-}0.006\\
      -0.003 & \phantom{-}0.025 & \phantom{-}0.003 & -0.031 & -0.012 & \phantom{-}0.032 & -0.007\\
      \phantom{-}0.002 & -0.001 & -0.001 & \phantom{-}0.015 & \phantom{-}0.006 & -0.007 & \phantom{-}0.015
    \end{pmatrix}
  \end{split}
\end{align}
where the overall low variances already indicate that the Mahalanobis distance will be large leading to a high value of the $T^2$ statistic and a miniscule p-value
\begin{align*}
  d_{\text{Mahalanobis}} := (\overline{y}_1 - \overline{y}_2)^T S_{\text{pooled}}^{-1} (\overline{y}_1 - \overline{y}_2) =&~ 443.4\\
  T^2 =&~ 14723.0\\
  p\text{-value} =&~ 1.8\cdot 10^{-168}\, .
\end{align*}
(Note that the value of the $F$ statistic from $T^2$ is found to be 2073. with degrees of freedom $f_1=7$, and $f_2= 182$, see, for example, \cite{MardiaKentTaylor2024}). This shows that the clusters can be clearly separated. These two examples give a hint that how multivariate analysis can be used for such a concentrated data,

\section{Distributions on Polypolar Coordinates}\label{Distr}
We have assumed in Section \ref{Summary} that if data are concentrated, we can get some insight using tangent projections. However, when the data are not concentrated, we need some plausible models. Note that the constrained size-and-shape coordinates are on a (half) polypolar space, some of which being spheres only, and we will give a method of construction of some plausible distributions on this space. There have been work in Directional Statistics for distributions on cylinders but not on polypolars. Our construction follows the Fisher approach where one take a multivariate normal distribution and conditioned it on the embedded space, in this case the polypolars.
 
We note the following two types of polar coordinates, the first for a full space, slicing with full spheres and another for a half space, slicing with half spheres. 

In the following we will be concerned with simplex MPP Type 2 coordinates. The corresponding results for Type 1 have similar form.
\begin{eqnarray}
  \lefteqn{\hspace*{-2cm} \nonumber
  \RR_+ \times \SSS^j = \{(r\cos \theta_1,r\sin\theta_1\cos \theta_2,\ldots, r\sin\theta_1\ldots \sin \theta_{j-1} \cos \theta_{j},r\sin\theta_1\ldots \sin\theta_j):}\\&&
  \label{eq:full-sphere-polar}
  \theta_l \in [0,\pi],1\leq l \leq j-1,\theta_j \in [0,2\pi), r>0\}\\ 
  \lefteqn{\hspace*{-2cm}
  \nonumber
  \RR_+ \times \SSS_+^j =\{(r\cos \theta_1,r\sin\theta_1\cos \theta_2,\ldots, r\sin\theta_1\ldots \sin \theta_{j-1} \cos \theta_{j},r\sin\theta_1\ldots \sin\theta_j):}\\
    \label{eq:half-sphere-polar}
  &&
  \theta_l \in [0,\pi/2],\theta_l \in [0,\pi],2\leq l \leq j-1,\theta_j \in [0,2\pi), r>0\}
\end{eqnarray}
 
Let us assume that there are $\ell$ length constraints where $m<\ell<k$. Recall we have for the constrained size-and-shape coordinates, only three types of spaces, namely 
\begin{enumerate} 
  \item In restricted spherical polar coordinates (for the frame)
  \begin{equation}\label{eq:Restricted}
    \SSS_+^{j-2} \times \RR_+ =\{(x_1,\ldots,x_{j}) \in \RR^{j}:x_1^2 + \ldots +x_{j-1}^2 =1,\; x_{j-1}, x_{j} > 0\}
  \end{equation}
  for $j=1,\ldots,m-1$.
  \item Product of spheres (polysphere) (For fixed length landmarks)
  \begin{equation}\label{eq:polysphere}
    \SSS^j = \{(x_1,\ldots,x_{j+1}) \in \RR^{j+1}: x_1^2 + \ldots +x_{j+1}^2 = 1\},\; j=m,\ldots,\ell,
  \end{equation}
  \item Full Euclidean Coordinates (additional landmarks -- Full Euclidean), essentially $\RR^j$
  \begin{equation}\label{eq:spherical}
    \SSS^{j-1} \times \RR_+ = \{(x_1,\ldots,x_{j},) \in \RR^{j} \}
  \end{equation}
\end{enumerate}

These are the building blocks of the simplex MMP  coordinate system. In restricted spherical polar coordinates given by (\ref{eq:Restricted}), the restriction is $x_{j+1} > 0$ which implies half-circle in the 2D and hemisphere in 3D but on a cylinder as $r>0$ is a variable. There is singularity at $r=0$ which has no real difficulties in constructing plausible pdf on this space. The second set of the coordinates (on sphere) obeys the $\ell$ length constraints where as the left over coordinates are the complete spherical coordinates, In our example of the five landmarks in 3D, $\ell=2$ so there is no the third type. 

We give some particular cases to get insight into constructing plausible distributions for some typical cases and then we can follow the same methods to construct distributions for the general case using \ref{eq:full-sphere-polar}, \ref{eq:half-sphere-polar} in \ref{eq:Restricted}, \ref{eq:polysphere} and  \ref{eq:spherical}.  
 
One of the non-standard case is the distribution on the half-plane mentioned at (\ref{eq:halfplane}) which we rewrite as 
\begin{equation}\label{eq:halfplane1} 
 x=r \cos \theta, y=r\sin \theta,  r>0, 0<\theta<\pi. 
\end{equation}
In this case, we start with bivariate  isotropic normal distribution with random variable $\mb{x}$ (2x1)
$$\mb{x}\sim N(\bm \mu, \Sigma).$$
so $\Sigma=\sigma^2 I$  but  a general $\bm \mu$; the pdf of $(r, \theta)$ is found to be (without  conditioning  the variables)   
\begin{equation}\label{eq:Npdf1} 
\propto r \exp{\{-\kappa_1r^2 +  \kappa_2 r\cos(\theta-\mu)\}}, r>0, 0<\theta<\pi, \kappa_1 >0, -\infty <\kappa_2 <\infty, 0\le \mu \le\pi.  
\end{equation}
We can also write the distribution for the general case with $\bm \mu, \Sigma$ both being general but the form has no  separate term in $r$ in exponent. A better  route  is to start with  a cone in 3D in spherical polar coordinates
\begin{equation}\label{eq:sphCoord}
  r(\sin\theta \cos\phi, \sin\theta \sin\phi, \cos\theta), r>0, 0\le\theta\le \pi, 0 \le\phi\le 2\pi, 
\end{equation}
where for a "cone", $\theta$ is a constant, say c, $0\le c\le \pi$, so the cone is generated by

\begin{equation}\label{eq:coneCorod}
  x= r(\sin c\cos\phi, \sin c \sin\phi, \cos c)^T. 
\end{equation}
Since $r>0$, the cone apex is excluded though for the continuous pdf this has no direct relevance. For a half cone we have $0\le\phi\le \pi.$ 

For the following discussion we replace $ \phi$ with $\theta$. We now start  with trivariate isotropic normal distribution with random variable $\mb{x}$ (3x1),
$$\mb{x}\sim N(\bm \mu, \sigma^2I)$$
with the pdf of $(r, \theta)$ is found on conditioning under the variables \ref{eq:coneCorod} as given in the Theorem below; the distribution has a term in $r$ in the exponent. 
\begin{thm} 
  The pdf of $(r,\theta)$ on conditioning is given by 
  \begin{equation}\label{eq:cone1}
    \{{c(\kappa_1, \kappa_2, \kappa_3)}\}^{-1} r \exp{\{-\kappa_1r^2 + \kappa_2 r + \kappa_3 r\cos(\theta-\mu)\}}, \kappa_1 \in\RR_+, \kappa_2, \kappa_3\in\RR, 0\le \mu \le2\pi, 
  \end{equation}
\end{thm}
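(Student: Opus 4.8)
The plan is to derive \eqref{eq:cone1} by the same Fisher‑type construction used for the half‑plane law \eqref{eq:Npdf1}: restrict the ambient isotropic Gaussian density on $\RR^3$ to the cone surface \eqref{eq:coneCorod}, equip the cone with its induced surface measure, reparametrise by $(r,\theta)$, and read off the exponent. Throughout, write $\bm\mu=(\mu_1,\mu_2,\mu_3)^T$, set $\rho:=\sqrt{\mu_1^2+\mu_2^2}\ge 0$ so that $(\mu_1,\mu_2)=\rho(\cos\mu,\sin\mu)$ for a unique $\mu\in[0,2\pi)$ (arbitrary when $\rho=0$), and put $\mb{u}(\theta):=(\sin c\cos\theta,\sin c\sin\theta,\cos c)^T$. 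Since $\|\mb{u}(\theta)\|=1$, a point of the cone is $\mb{x}=r\,\mb{u}(\theta)$ with $\|\mb{x}\|^2=r^2$, and (assuming $0<c<\pi$, so $\sin c>0$) the map $(r,\theta)\mapsto r\,\mb{u}(\theta)$ is a regular parametrisation; for a half cone one simply restricts $\theta\in(0,\pi)$.

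First I would expand the Gaussian exponent on the cone. Using $\mb{x}\sim N(\bm\mu,\sigma^2 I)$ and $\|\mb{x}\|^2=r^2$,
\begin{equation*}
  \|\mb{x}-\bm\mu\|^2 = r^2 - 2r\,\bm\mu^T\mb{u}(\theta) + \|\bm\mu\|^2,\qquad
  \bm\mu^T\mb{u}(\theta) = \rho\sin c\,\cos(\theta-\mu) + \mu_3\cos c,
\end{equation*}
where the second identity is just the harmonic‑addition formula $\mu_1\cos\theta+\mu_2\sin\theta=\rho\cos(\theta-\mu)$ applied after factoring out $\sin c$. Next I would compute the area element of the cone: with $\partial_r(r\mb{u})=\mb{u}(\theta)$ and $\partial_\theta(r\mb{u})=r\,\mb{u}'(\theta)$, these vectors are orthogonal with norms $1$ and $r\sin c$, so the surface element is $\|\partial_r(r\mb{u})\times\partial_\theta(r\mb{u})\|\,dr\,d\theta=r\sin c\,dr\,d\theta$. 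Consequently the induced density of $(r,\theta)$ is proportional to
\begin{equation*}
  r\sin c\,\exp\!\Big(-\tfrac{1}{2\sigma^2}\big(r^2 - 2r(\rho\sin c\cos(\theta-\mu)+\mu_3\cos c) + \|\bm\mu\|^2\big)\Big).
\end{equation*}
Absorbing the constants $\sin c$ and $\exp(-\|\bm\mu\|^2/(2\sigma^2))$ into the normalising constant $c(\kappa_1,\kappa_2,\kappa_3)$ and setting
\begin{equation*}
  \kappa_1:=\frac{1}{2\sigma^2}>0,\qquad \kappa_2:=\frac{\mu_3\cos c}{\sigma^2}\in\RR,\qquad \kappa_3:=\frac{\rho\sin c}{\sigma^2}\ge 0
\end{equation*}
yields exactly \eqref{eq:cone1}; a negative $\kappa_3$ is accommodated by replacing $\mu$ with $\mu+\pi$, so one may state $\kappa_3\in\RR$. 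Finiteness of $c(\kappa_1,\kappa_2,\kappa_3)$ follows since $-\kappa_1 r^2$ with $\kappa_1>0$ dominates the linear‑in‑$r$ terms as $r\to\infty$, and the factor $r$ removes any issue at $r=0$.

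The calculation is essentially routine, so the only points that need care are: (i) interpreting ``conditioning'' correctly — it means forming the induced probability density on the two‑dimensional submanifold (the cone) with respect to its surface measure, exactly as the von Mises–Fisher law arises from a Gaussian conditioned on a sphere, rather than a literal conditioning on a null set; and (ii) extracting the Jacobian factor $r\sin c$ so that the \emph{single} power $r$ (not $r^2$ or $r^0$) appears in front of the exponential, matching the planar prototype \eqref{eq:Npdf1}. I expect (ii) — getting the cone area element right and seeing that $\sin c$ is a harmless constant while the $r$ survives — to be the step most prone to slips, but it is not a genuine obstacle.
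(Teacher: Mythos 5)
Your derivation is correct and is exactly the Fisher-type construction the paper intends: the paper states this theorem without a written proof (only the normalizing-constant lemma is proved), and your computation of the exponent and the identifications $\kappa_1=1/(2\sigma^2)$, $\kappa_2=\mu_3\cos c/\sigma^2$, $\kappa_3=\rho\sin c/\sigma^2$ supply the missing argument. Your choice of the induced surface measure on the cone --- yielding the single Jacobian factor $r\sin c$ rather than the $r^2\sin c$ one would obtain by conditioning on the colatitude in spherical coordinates --- is the interpretation consistent with the stated density and with the paper's remark that $\kappa_2=0$ recovers a bivariate normal in Cartesian coordinates.
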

where the normalising constant $c(\kappa_1, \kappa_2, \kappa_3)$ is derived below in the Lemma. Note that for $\kappa_2 =0$, we have a bivariate normal in Cartesian coordinates. Further, the conditional distribution of $\theta$ given $r$ is a von Mises distribution. 

Note that the pdf of the cone given by (\ref{eq:cone1}) belongs to the exponential family so we can get the MLE of the following parameters 
$$\kappa_1, \kappa_2, \kappa_3^*= \kappa_3 \cos\mu, \kappa_4= \kappa_3 \sin\mu.$$
However for the moment estimator of the mean is simply $\bar r $ though $E(r)$ has no easy form (except for the concentration case) and a moment estimator of $\mu$ is the sample mean direction. 

\begin{lem}
  The normalizing constant is given by 
  \begin{equation}\label{eq:norm}
    c(\kappa_1, \kappa_2, \kappa_3)= \pi\sum_{m=0}^{\infty} \frac{(\kappa_2)^m}{(\kappa_1)^{\frac{m}{2}+1}} \frac{\Gamma(\frac{m}{2}+1)}{m!}\prescript{}{1}F_1 \left(\frac{m}{2}+1, 1, \frac{\kappa_3^2}{4\kappa_1} \right),
  \end{equation}
  where $\prescript{}{1}F_1(a; b; z)$ is the confluent
  hypergeometric function given by 
  \begin{align}
    \begin{split}
      \prescript{}{1}F_1(a; b; z)&= 1 + \frac{a}{b}\frac{z}{1!} + \frac{a(a+1)}{b(b+1)}\frac{z^2}{2!} + \cdots
      = \sum_{r=0}^{\infty}\frac{(a)_r z^r}{(b)_r r!},
    \end{split}
    \label{eq:1F1sum}
  \end{align}
  where $(a)_r = a(a+1)\cdots(a+r-1)$ and $(a)_0=1$.
\end{lem}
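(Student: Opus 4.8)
The plan is to compute $c(\kappa_1,\kappa_2,\kappa_3)$ directly as the integral of the unnormalised density over its support. Here the running variable $\theta$ (the renamed $\phi$) ranges over the whole circle — which is precisely why the statement permits $0\le\mu\le 2\pi$ — so we must evaluate
\[
c(\kappa_1,\kappa_2,\kappa_3)=\int_0^{\infty}\!\!\int_0^{2\pi} r\,\exp\{-\kappa_1 r^2+\kappa_2 r+\kappa_3 r\cos(\theta-\mu)\}\,d\theta\,dr .
\]
The integrand is nonnegative, so by Tonelli we may integrate in either order, and I would take the angular integral first. By $2\pi$-periodicity of the cosine and the integral representation $I_0(z)=\tfrac{1}{2\pi}\int_0^{2\pi}e^{z\cos\psi}\,d\psi$ of the modified Bessel function of order zero, $\int_0^{2\pi}\exp\{\kappa_3 r\cos(\theta-\mu)\}\,d\theta=2\pi I_0(\kappa_3 r)$; this is also the computation behind the remark after the theorem that $\theta\mid r$ is von Mises. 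Hence $c=2\pi\int_0^{\infty} r\,e^{-\kappa_1 r^2+\kappa_2 r}I_0(\kappa_3 r)\,dr$.

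Next I would expand $e^{\kappa_2 r}=\sum_{m\ge 0}(\kappa_2 r)^m/m!$. Because $\kappa_1>0$, the series of absolute values sums to $e^{|\kappa_2|r}$, and $r\,e^{-\kappa_1 r^2+|\kappa_2|r}I_0(\kappa_3 r)$ is integrable on $(0,\infty)$ (the Gaussian factor beats the $e^{|\kappa_3|r}$ growth of $I_0$), so dominated convergence licenses term-by-term integration:
\[
c=2\pi\sum_{m=0}^{\infty}\frac{\kappa_2^m}{m!}\int_0^{\infty} r^{m+1}e^{-\kappa_1 r^2}I_0(\kappa_3 r)\,dr .
\]
It then remains to identify the inner integral as a confluent hypergeometric function. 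I would expand $I_0(\kappa_3 r)=\sum_{j\ge 0}(j!)^{-2}(\kappa_3 r/2)^{2j}$, interchange sum and integral once more by positivity, and evaluate each term via the elementary substitution $u=r^2$ in $\int_0^{\infty} r^{2b-1}e^{-\kappa_1 r^2}\,dr=\tfrac{1}{2}\Gamma(b)\kappa_1^{-b}$ with $b=\tfrac{m}{2}+1+j$. Using $(j!)^2=(1)_j\,j!$ and $\Gamma(\tfrac{m}{2}+1+j)=\Gamma(\tfrac{m}{2}+1)(\tfrac{m}{2}+1)_j$ to resum the $j$-series gives
\[
\int_0^{\infty} r^{m+1}e^{-\kappa_1 r^2}I_0(\kappa_3 r)\,dr=\frac{\Gamma(\frac{m}{2}+1)}{2\,\kappa_1^{\frac{m}{2}+1}}\;\prescript{}{1}F_1\!\left(\frac{m}{2}+1,\,1,\,\frac{\kappa_3^2}{4\kappa_1}\right)
\]
(equivalently, a tabulated integral, e.g. in Gradshteyn--Ryzhik). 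Substituting back and absorbing $2\pi\cdot\tfrac{1}{2}=\pi$ reproduces \eqref{eq:norm} verbatim. As a sanity check I would note that the resulting $\kappa_2$-series converges for every $\kappa_2\in\RR$, since $\Gamma(\tfrac{m}{2}+1)/m!$ decays faster than any geometric rate (like $m^{-m/2}$), dominating the at most $e^{O(\sqrt{m})}$ growth of $\prescript{}{1}F_1(\tfrac{m}{2}+1,1,\cdot)$, consistent with the stated parameter range.

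The main work is the bookkeeping rather than any single calculation: there are three interchanges — the $\theta$- and $r$-integrals, the $\kappa_2$-series and the $r$-integral, and the $I_0$-series and the $r$-integral — and for $\kappa_2<0$ the summands are not of one sign, so the middle interchange needs the dominating function above rather than mere positivity. The only genuinely analytic ingredient is the identity $\int_0^{\infty} r^{2b-1}e^{-\kappa_1 r^2}I_0(\kappa_3 r)\,dr=\tfrac{1}{2}\Gamma(b)\kappa_1^{-b}\,\prescript{}{1}F_1(b,1,\kappa_3^2/(4\kappa_1))$, and, as indicated, it can be derived from scratch from the Bessel series together with the Gamma integral, so no external table is strictly needed.
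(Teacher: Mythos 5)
Your proof is correct and follows essentially the same route as the paper's: integrate out $\theta$ to produce $2\pi I_0(\kappa_3 r)$, expand $e^{\kappa_2 r}$ as a power series, interchange sum and integral, and identify the remaining radial integral as $\tfrac{1}{2}\Gamma(\tfrac{m}{2}+1)\kappa_1^{-\frac{m}{2}-1}\,\prescript{}{1}F_1(\tfrac{m}{2}+1,1,\kappa_3^2/(4\kappa_1))$. The only differences are that you derive the Gradshteyn--Ryzhik integral from the Bessel series rather than citing it, and you supply the convergence justifications (dominated convergence for the $\kappa_2<0$ case) that the paper leaves implicit.
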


\begin{proof}
  Using the following in the integration of the pdf with respect to $\theta$, we have
  \begin{equation} \label{eq:Bssel}
    \int_0^{2\pi} e^{\kappa \cos(\theta-\mu)} d\theta =2\pi I_0(\kappa)
  \end{equation}
  where $I_p(z)$ is the modified Bessel function of first kind:
  \begin{equation}\label{eq:BesselI}
    I_p(z)= \sum_{m=0}^{\infty} \frac{(z/2)^{2m+p}}{m! \Gamma(p+m+1)}.
  \end{equation}
  we find that 
  \begin{equation} \label{eq:Bssel2}
    c(\kappa_1, \kappa_2, \kappa_3)=
    2\pi \int_0^{2\pi} r \exp{\{-\kappa_1 r^2 + \kappa_2 r\} }I_0(\kappa_3 r) dr.
  \end{equation} 
  Now expanding $\exp{\{ \kappa_2 r\}}$ as a series in (\ref{eq:Bssel2}) and changing the order of integration we have
  \begin{equation} \label{eq:Bssel3} 
    c(\kappa_1, \kappa_2, \kappa_3)=
    2\pi \sum_{r=0}^{\infty} \frac{\kappa_2^m}{ m!}\int_0^\infty r^{m+1} \exp{\{-\kappa_1 r^2 \} }I_0(\kappa_3 r) dr.
  \end{equation}
  We have from \cite{gradshteyn1980}, Section 6.6.31 (Formula 1), 
  $$\int_0^\infty r^{m+1} \exp{\{-\kappa_1 r^2 \} }I_0(\kappa_3 r) dr= \frac{\Gamma(\frac{m}{2}+1)}{2(\kappa_1)^{\frac{m}{2}+1}}\prescript{}{1}F_1 \left(\frac{m}{2}+1, 1, \frac{\kappa_3^2}{4\kappa_1} \right) $$ 
  which when this integral is substituted in (\ref{eq:Bssel3}), we get the normalizing constant given in \eqref{eq:norm}.
\end{proof}

If we restrict $\theta $ to $0<\theta<\pi $, semi-circle then the pdf has the same form except that now $c(\kappa_1, \kappa_2, \kappa_3)$ to be replaced by $c(\kappa_1, \kappa_2, \kappa_3)/2$ since at the step (\ref{eq:Bssel}) use now 
\begin{equation} \label{eq:Bssel4}
    \int_0^{\pi} e^{\kappa \cos(\theta-\mu)} d\theta =\pi I_0(\kappa)
\end{equation}
which is half of the integral at (\ref{eq:Bssel}). In fact, this is the case with

\begin{equation}\label{eq: Resrticted 2D }
\RR_+^{2} = \{(x_1,x_2) \in \RR^{2}:x_1^2 +x_2^2 =r_2^2,\; r_2, x_{2} > 0\}.
\end{equation}

We can carry out inference in the standard way for a member of the exponential family. However, as the normalising constant is complicated, one can use the score matching estimation (see, for example,).

The other type of random variables in the coordinates are on the full sphere so we can use the Fisher distribution. Let $\ell$ be the spherical coordinates then we can use
\begin{equation}\label{eq:Fisher1} 
    f(\ell, \mu)\propto e^{\kappa \ell^T\mu}\sin\theta
\end{equation}
where we have $\ell,\mu$ are on the sphere and $\kappa>0$.

We can construct the distributions as joint distribution of the coordinates in various directions. The simplest case is the distribution of $r,\theta \ell$ where we restrict $\theta $ to $ 0<\theta<\pi $ in (\ref {eq:coneCorod}) then from (\ref{eq:cone1} ) and (\ref{eq:Fisher1}) is given by 
\begin{equation}\label{eq:cone1Sphere}
 \propto r e^{\{ \kappa \ell^T\mu-\kappa_1r^2 + \kappa_2 r + \kappa_3 r\cos(\theta-\mu) \}}, 
\end{equation}
where the ranges of the parameters are given above at (\ref{eq:cone1} ) and (\ref{eq:Fisher1}).

For the cone type coordinates, we can extend the work to any dimension $p$ by first writing the $p$ spherical coordinates (\ref{eq:coneCorod}) and then follow the same conditioning as above, and for the poly-sphere we can use the extension of the Jupp and Mardia distribution (\cite{Jupp-Mardia80}) given in \cite{mardiaetal2016ar} for two-spheres as the joint Fisher distribution and then to poly-sphere. Thus we have a rich class of new distribution by using their joint distribution. Note that Appendix 1 shows the distributions becomes  intricate when we use the standard coordinates system  for the length constrained size-and-shape analysis.

\section{Outlook: Statistical Clustering of Constrained Shape}\label{scn:clusteringOfShape}

One application of constrained shape analysis lies in the analysis of geometrical structure of biomolecules, in particular, in classifying possible shapes. To that end, in a previous publication \citep{mardia2021principal} we introduced the MINT-AGE algorithm that was used, among others for reconstruction of suites of SARS-CoV-2 in \citet{wiechers2023learningJRSSC}. One ingredient in MINT-AGE is nonparametric mode hunting on a dominating first principal circle. In an ongoing collaboration with the Richardson Lab at Duke University application, some highly relevant conformation classes studied are too rare, however, to be found with this nonparametric method. Here we illustrate a more powerful parametric extension, designed to find very small clusters with up to 3 elements only as well. To this end, first, we briefly recall MINT-AGE in the next section.

While this new method was already successfully employed for constrained shape modelled by dihedral angles \citep{wiechers2025RNAPrecis}, in future work we plan to employ MINT-AGE, including parametric mode hunting, to constrained shape modelled by multicentred MPP coordinates, involving a polysphere part.

\subsection{MINT-AGE Overview}

In its original version \citep{mardia2021principal}, MINT-AGE was applied to constrained shape of biomolecules represented by dihedral angles, i.e. torus-valued data, building on \emph{torus-PCA} from \cite{EltznerHuckemannMardia2018}. It is a multistep clustering method and in its first step it iteratively adapts to clusters from low density to high density. In the second step it maps cluster obtained from the first step -- as they lie on a torus, not allowing for straightforward further dimension reduction -- to a suitably stratified sphere. On that stratified sphere, with a variant of \emph{principal nested spheres} (PNS) from \cite{jung2010analysis}, the dominating first principle circle is determined, onto which the cluster is projected. This cluster is then subjected to nonparametric mode hunting extending \cite{duembgen2008} to the circle and using the WiZer from \cite{WiZer2016} in order to detect subclusters.

Notably, torus-PCA extends naturally to polysphere-PCA \citep{HE_LASR15} and thus, the dimension reduction and mode hunting steps of MINT-AGE naturally apply to constrained shape involving polysphere coordinates.

\subsection{Parametric Mode Hunting on the Circle for Very Small Cluster Sizes}

The circular mode hunting as described in \citet{wiechers2023learningJRSSC} relies on a multiscale test which does not posit a specific model for the shape of modes. The price for the model agnosticism is a low power of the methods so that clusters which comprise less than around $30$ data points are not distinguished.

In order to achieve higher power, we consider a simplified model based approach. We fit two models to the one dimensional projected data on the circle:
\begin{description}
  \item[Model I:] One single wrapped normal distribution with variable $\mu$ and $\sigma$.
  \item[Model II:] Two wrapped normal distributions with variable $\mu_1$, $\sigma_1$, $\mu_2$ and $\sigma_2$.
\end{description}
We use an EM algorithm (see, for example, \cite[384]{MardiaKentTaylor2024}) for the second model. Then, we apply a likelihood ratio test with confidence level $\alpha \in (0,1])$ for the estimated models as follows. Let $\ell_{I}$ be the log-likelihood according to the fit of Model I and $\ell_{II}$ be the log-likelihood according to the fit of Model II. Then obviously $\ell_{II} \ge \ell_{I}$. According to Wilks' Theorem \citep{Wilks1938},
\begin{align*}
    2 (\ell_{II} - \ell_{I}) \inD \chi^2_2,
\end{align*}
for sample sizes $\to \infty$, where the of degrees $2$ of freedom of the $\chi^2$ distribution stem from the fact that Model II has two more degrees of freedom than Model I. Thus, the likelihood ratio test is a $\chi^2$ test.

If the test does not reject, the data set is added to the final clusters list. If the test rejects, we assign to each point the label of the mixture component for which that point has a higher likelihood value and partition the data set according to these labels. The procedure is then repeated on the two resulting partitions, until none of the likelihood ratio tests rejects.

Here is a pseudo code for the new parametric mode hunting. 
\begin{enumerate}
  \item INPUT: projections of a data cluster onto a one-dimensional sphere with confidence level $ \alpha $ (typically $0.05$) and minimal cluster length $\kappa$ (typically $3$).
  \item Fit the two Gaussian mixture Models I and II above with 1 and 2 components, respectively, to the one dimensional projections and calculate the log-likelihood for both fits.
  \item Assign the subclusters for both fits and check if they have at least $ \kappa $ points; else don't split cluster and RETURN cluster.
  \item Perform the above likelihood ratio test to the level $\alpha$. 
  \item If the test rejects, split the cluster into the two subclusters and GOTO Step 2 for each subcluster separately; else don't split cluster and RETURN cluster.
\end{enumerate}

\section{Discussion}

We can extend these size-and-shape coordinates in various directions, For example, we can consider the full group of Euclidean transformations, namely also including reflection, leads, allowing $R \in O(m) = \{R \in \RR^{m\times m}: R^TR = I_m\}$, to \emph{size-and-reflection-shape coordinates} also not pursued in this work (see \citet{Dryden2016}).
Extension to similarity shape, affine shape etc also can be done.
These coordinates are already developed with QR decomposition in \citep{goodallkvm1991,goodallkvm1992,goodall1993multivariate}.

Motivated by our applications, we have given data analysis only for concentrated coordinates. Our framework, however, allows to construct distributions and carry out inference even when the distribution of the coordinates is not concentrated. These model constructions allow us to have unsupervised learning for clustering. We aim to use these models on combining Cluster 1 and Cluster 2 (P33, see Section \ref{Summary}) for testing our future unsupervised learning.

\section*{Acknowledgments}

The authors would like to thank Franziska Hoppe and Henrik Wiechers for helpful discussions. BE and SH acknowledge support from DFG SFB 1456; SH acknowledges support from DFG HU-1575-7, DFG GK 2088 and the Niedersachsen Vorab of the Volkswagen Foundation. KVM would like to thank Ian Dryden and Colin Goodall for the helpful comments.

\appendix

\section{Appendix 1: Mardia Size and Shape Coordinates in 3D}

We start with the Bookstein coordinates in 2-D. Let $x_i, i=1,\ldots,k$ be the original landmarks. For 2-D, the first two new coordinates are $(0, 0)$ and $(d, 0)$ using $x_1$ and $x_2$. That is
\[
u_i = S(\phi ) y_i, \;\;\; y_i = x_i - x_1, \;\;\; i = 1, \ldots, k,
\]
where now $S(\phi)$ is a $2 \times 2$ rotation matrix 
\[
S(\phi) = \left[ 
\begin{array}{cc}
\cos\phi & \sin\phi \\
-\sin\phi & \cos\phi 
\end{array}
\right]
\]
such that 
\[
u^T_2 = (|x_2 - x_1|, 0) = (|y_2|, 0).
\]
Following \citet{Mardia09} and \citet{Dryden2016} we indicate how to construct Bookstein-type coordinates for size-and-shape (form) analysis. Let $X(k \times 3)$ be the configuration matrix with rows $x_i$. First note that there are six unknowns in the rotation matrix $A$ (3 Eulerian angles) and a translation vector, but $x_1$ and $x_2$ are not enough to determine the form coordinates, since the distance between $x_1$ and $x_2$ is fixed,  so we have to use $x_3$ for the one remaining constraint. We first use the point $x_1$ as the origin so that
\begin{equation} \label{yCoord}
  y_i = x_i - x_1, \;\;\; i = 1, \ldots, k.
\end{equation}

Then use $y_2$ (i.e. $x_2$) to fix the co-latitude and longitude of the points, i.e. let $(\theta_2, \phi_2)$ be the polar coordinates of $y_2$ ($z$-axis is the north pole),
\begin{equation} \label{y2Coord}
  y_2 = |y_2|\left[
  \begin{array}{c}
    \sin\theta_2 \cos\phi_2 \\
    \sin\theta_2 \sin\phi_2 \\
    \cos\theta_2
  \end{array}
  \right]
\end{equation}

So that we have the new coordinate system (Euler angles) 
\begin{equation} \label{uCoord}
  u_i = R(\theta_2, \phi_2) y_i, \;\;\; i=1, \ldots, k
\end{equation}
where
\begin{equation} \label{3Drot}
  R(\theta, \phi) = \left[
  \begin{array}{ccc}
    \cos \theta \cos\phi & \cos\theta \sin\phi & -\sin\theta \\
    -\sin\phi            & \cos \phi & 0 \\
    \sin\theta \cos\phi  & \sin\theta \sin\phi & \cos\theta
  \end{array}
  \right].
\end{equation}
That is 
\begin{equation} \label{u1u2}
  u_1=(0,0,0)^T, u_2= |y_2|(0,0,1)^T.
\end{equation}
Now rotate the new coordinates around the coordinates $u_3$ (or $x_3$), i.e. if
\begin{equation} \label{u3Coord}
  u_3 = \left[
  \begin{array}{c}
    a \\ b \\ c
  \end{array} \right] =|y_3| \left[
  \begin{array}{c}
    \sin\theta_3 \cos\phi_3\\
    \sin\theta_3 \sin\phi_3 \\
    \cos\theta_3
  \end{array}
  \right],
\end{equation}
then the Bookstein type coordinates for the form in 3-D (angle $\phi_3$ with new $x$-axis) are
\begin{equation}
  v_i = S(\phi_3 ) u_i, \;\;\; i = 1, \ldots, k,
\end{equation}
where
\begin{equation} \label{2Drot}
  S(\phi ) = \left[
  \begin{array}{ccc}
    \cos\phi & \sin\phi &0\\
    -\sin\phi & \cos\phi &0\\
    0 & 0 & 1 
  \end{array}
  \right].
\end{equation}
Thus we have
\begin{equation} \label{vAngle3}
  v_1=(0,0,0)^T, v_2=|y_2|(0,0,1)^T, v_3=|y_3|(\sin \theta _3, 0, \cos \theta _3)^T.
\end{equation}
and 
\begin{equation} \label{vAngles}
  v_i = S(\phi_3 ) R(\theta_2, \phi_2) y_i, \;\;\; i = 1, \ldots, k, 
\end{equation}
where $y_i = x_i - x_1, \;\;\; i = 1, \ldots, k$. Further, we have 
\begin{equation} \label{vCoord}
  v_i= |y_i|S(\phi_3 ) R(\theta_2, \phi_2) \left[
  \begin{array}{c}
    \sin\theta_i \cos\phi_i \\
    \sin\theta_i \sin\phi_i \\
    \cos\theta_i
  \end{array}
  \right],
  \;\;\; i = 1, \ldots, k,
\end{equation}
where $ 0\leq \theta_i\leq \pi, 0\leq \phi_i\leq 2\pi,\; i = 3, \ldots, k.$

Thus we have a representation of Bookstein Coordinates (\ref{vCoord}) all in terms of distances and angles, and can call it ``polar representation'' versus ``Euclidean representation''. This is more useful in Bioinformatics where $\theta$ is the bond angle and $\phi$ is the dihedral angle. We can extend these to any dimension.

In Bioinformatics, these coordinates are termed bond-angle-torsion (BAT) co-ordinates, see for example, \citet{Killianetal07} and can be given a tree structure.

\subsection{Distributions and Mardia's Constrained Size-and-Shape Coordinates}\label{DistrB}

In 2D, a full distribution has been given in \cite{MR1158517} and for concentrated data, angular representation can be worked out. The approach adopted is to integrate out ``nuisance'' variables under normality with full mean vector and covariance matrix for the whole configuration of the original landmarks $x's$. Better approach is perhaps to use conditional distributions. How to use the representation given by \ref{vCoord} to find a plausible distribution of 
$$v_i = S(\phi_3 ) R(\theta_2, \phi_2) y_i, \;\;\; i = 2, \ldots, k,$$ 
assuming that the first three landmarks $x_i, i=1,2,3$ selected are concentrated and are well separated. 
 
Consider k=5 with two lengths fixed. Consider the size-and-shape coordinates given by (\ref{vAngle3}). Suppose the first length fixed is between $x_2$ and $x_3$ then we can ignore $v_2$ as $|y_2|= |x_2-x_3|=1, \text{say}$ is fixed so the coordinates are now simply 
\begin{equation} \label{vAngleR}
  v_3=(\sin \theta _3, 0, \cos \theta _3)^T,v_4 = S(\phi_3 ) R(\theta_2, \phi_2) y_4,\; v_5 = S(\phi_3 ) R(\theta_2, \phi_2) y_5
\end{equation}
Now suppose another length between $x_4$ and $x_5$ is fixed then with $y_4= x_4-x_3, y_5= x_5-x_3$ so we have a constrain
$$|y_4-y_5|= |v_4-v_5|$$
so we have the system
\begin{equation} \label{vAngle5}
  v_3=|y_3|(\sin \theta _3, 0, \cos \theta _3)^T, v_i = S(\phi_3 ) R(\theta_2, \phi_2) y_i, i=4,5
\end{equation}
under constraint
$$|y_4-y_5|= |v_4-v_5|=|x_4-x_5|. $$ 
Now for example consider, the joint distribution of $v_4,v_5$ which depend on $x_1, x_4,x_5$ where very intricate with heavy spurious dependence in contrast to the Simplex coordinates where the distribution of the coordinates depends only on the fixed landmarks only so under the isotropic normal model, the simplex system will be more ``stable''.

\bibliographystyle{rss}
\bibliography{ref_PDWC,shape,circular,stats,proteins,richardsonrefs}

\end{document}